\newtheorem{theorem}{Theorem}
\newcommand{\reals}{{\mathbb{R}}}
\newcommand{\sgn}{{\mathrm{sgn}}}
\DeclareMathOperator*{\argmax}{arg\,max}
\DeclareMathOperator*{\Min}{minimize}
\DeclareMathOperator*{\Max}{maximize}
\newcommand{\st}{\mathrm{s.t.}}
\newcommand{\vx}{{\bf x}}
\newcommand{\ve}{{\bf e}}
\newcommand{\vu}{{\bf u}}
\newcommand{\vv}{{\bf v}}
\newcommand{\vs}{{\bf s}}
\newcommand{\vt}{{\bf t}}
\newcommand{\vz}{{\bf z}}
\newcommand{\vw}{{\bf w}}
\newcommand{\vy}{{\bf y}}
\newcommand{\vU}{{\bf U}}
\begin{document}

\markboth{X. Huang, L. Shi, M. Yan, and J.A.K. Suykens}{Neurocomputing}


\title{Pinball Loss Minimization for One-bit Compressive Sensing: \\ Convex Models and Algorithms}


\author{Xiaolin Huang$^{1,2}$}
\ead{xiaolinhuang@sjtu.edu.cn}
\author{Lei Shi$^{3}$}
\ead{leishi@fudan.edu.cn}
\author{Ming Yan$^{4,}$\corref{cor1}}
\ead{yanm@math.msu.edu}
\author{Johan A.K. Suykens$^{2}$}
\ead{johan.suykens@esat.kuleuven.be}
\address{$^1$Institute of Image Processing and Pattern Recognition,  Shanghai Jiao Tong University, Shanghai 200240, P.R. China}
\address{$^2$KU Leuven, ESAT-STADIUS, Leuven, B-3001 Belgium}
\address{$^3$Shanghai Key Laboratory for Contemporary Applied Mathematics and School of Mathematical Sciences, Fudan University, Shanghai 200433, P.R. China}
\address{$^4$Department of Computational Mathematics, Science and Engineering and Department of Mathematics, Michigan State University, MI 48824, USA}
\cortext[cor1]{Corresponding author.}

\begin{abstract}
The one-bit quantization is implemented by one single comparator that operates at low power and a high rate. Hence \emph{one-bit compressive sensing (1bit-CS)} becomes attractive in signal processing. When measurements are corrupted by noise during signal acquisition and transmission, 1bit-CS is usually modeled as minimizing a loss function with a sparsity constraint. The one-sided $\ell_1$ loss and the linear loss are two popular loss functions for 1bit-CS. To improve the decoding performance on noisy data, we consider the \emph{pinball loss}, which provides a bridge between the one-sided $\ell_1$ loss and the linear loss. Using the pinball loss, two convex models, an elastic-net pinball model and its modification with the $\ell_1$-norm constraint, are proposed. To efficiently solve them, the corresponding dual coordinate ascent algorithms are designed and their convergence is proved. The numerical experiments confirm the effectiveness of the proposed algorithms and the performance of the pinball loss minimization for 1bit-CS.
\end{abstract}

\begin{keyword}
compressive sensing, one-bit, pinball loss, dual coordinate ascent
\end{keyword}

\maketitle

\newpage

\section{Introduction}
Quantization happens in analog-to-digital conversions, and the extreme quantization scheme is to acquire one bit for each measurement.
This scheme only needs a single comparator and has many benefits in hardware implementation such as low power and a high rate.
Suppose we have a linear sensing system $\vu \in \reals^n$ for a signal $\vx \in \reals^n$.
The analog measurement is $\vu^\top \vx$, and the one-bit quantized observation is its sign, i.e., $y = \sgn(\vu^\top \vx)$.
The signal recovery problem related to one-bit measurements can be formulated as finding a signal $\vx$ from the signs of a set of measurements, i.e., $\{\vu_i, y_i\}_{i=1}^m$ with $y_i = \sgn \left( \vu_i^\top \vx  \right)$.

Note that signals with the same direction but different magnitudes have the same one-bit measurements with the same measurement systems, i.e., the magnitude of the signal is lost in this quantization.
Therefore, we have to make an additional assumption on the magnitude of $\vx$.
Without loss of generality, we assume $\|\vx\|_{2} = 1$.
Then the meaning of one-bit signal recovery can be explained as finding the subset of the unit sphere $\|\vx\|_{2} = 1$ partitioned by many hyperplanes.
In general, when the number of hyperplanes becomes larger, the feasible set becomes smaller, and the recovery result becomes more accurate.

However, there may still be infinitely many points in the subset, and we need additional assumptions on the signal to make it unique.
\emph{One-bit compressive sensing (1bit-CS)}, which assumes that the original signal is sparse, is proposed in~\cite{boufounos20081bit} and has attracted much attention in recent years~\cite{laska2011trust,jacques2013robust}.
It tries to recover a sparse signal from the signs of a small number of measurements.
However, different from the regular CS without quantization~\cite{candes2005decoding,donoho2006compressed,eldar2012compressed}, the number of measurements  in 1bit-CS can be larger than the dimension of the signal.
When all the quantized measurements are exact, 1bit-CS algorithms try to find the sparsest solution in the feasible set, i.e.,
\begin{eqnarray}\label{1bit}
\Min_{\vx \in \reals^n} & & \|\vx\|_{0} \nonumber\\
\st && \|\vx\|_{2} = 1,\\
& & y_i =\sgn(\vu_i^\top \vx), ~\forall i = 1, 2, \ldots, m, \nonumber
\end{eqnarray}
where $\|\cdot\|_0$ counts the number of non-zero components.
This problem is difficult to solve due to the $\ell_0$ penalty and the constraint $\|\vx\|_2=1$.
There are several algorithms that approximately solve~\eqref{1bit} or its variants; see~\cite{boufounos20081bit,laska2011trust,boufounos2009greedy,xu2014bayesian}.

In~\eqref{1bit}, we require that $y_i =\sgn(\vu_i^\top \vx)$ holds for all the measurements with the assumption that there is no noise.
However, in real applications, noise is unavoidable in the measurement process, i.e.,
\begin{equation}\label{noise-data}
y_i=\sgn(\vu_i^\top\vx+\varepsilon_i),
\end{equation}
where $\varepsilon_i$ is the noise.
When $\sgn(\vu_i^\top\vx+\varepsilon_i)=\sgn(\vu_i^\top\vx)$ (i.e., $\varepsilon_i$ is small) for all $i$, we can still recover the true signal accurately as in the noiseless case.
However, when the noise $\varepsilon_i$ is large, we may have $\sgn(\vu_i^\top\vx+\varepsilon_i)\neq\sgn(\vu_i^\top\vx)$.
In addition, there could be sign flips on $y_i$ during the transmission.
Note that sign changes because of noise happen with a higher probability when the magnitude of the true analog measurement is small, while sign flips during the transmission happen randomly among the measurements.

With noise or/and sign flips, the feasible set of~\eqref{1bit} excludes the true signal and can become empty.
To deal with noise and sign flips, the constraint $y_i =\sgn(\vu_i^\top \vx)$ is replaced by loss functions to penalize the inconsistency.
The first model is given in~\cite{jacques2013robust}, where the \emph{one-sided $\ell_1$ loss} $\max\{0, -y_i(\vu_i^\top\vx)\}$ is used to measure the sign inconsistency. While~\cite{plan2013robust} considers the \emph{linear loss} $-y_i(\vu_i^\top\vx)$.
Via minimizing the one-sided $\ell_1$ or the linear loss, some robust 1bit-CS models and the corresponding algorithms are proposed in~\cite{jacques2013robust, plan2013robust, yan2012robust,zhang2014efficient}.
These models will be reviewed in Section II.

In this paper, we will consider the trade-off solution between the one-sided $\ell_1$ loss and the linear loss, named \emph{pinball loss}, to establish recovery models for 1bit-CS.
Statistically, the pinball loss is closely related to the concept of quantile; see~\cite{koenker2005quantile,christmann2007svms,steinwart2011estimating} for regression and~\cite{huang2014support} for classification.
Use the following definition for the pinball loss:
\begin{equation}\label{pinball}
\small
L_{\tau,c}(t) = \left\{
\begin{array}{ll}
c+t, & t \geq -c,\\
-\tau (c+t), & t < -c,
\end{array}
\right.
\end{equation}
where $t=-y_i(\vu_i^\top\vx)$. (There is another and equivalent definition of the pinball loss in quantile regression field; see, e.g., \cite{christmann2007svms}.)
It is characterized by parameters $\tau$ and $c$, and it is convex when $\tau\geq -1$.
The one-sided $\ell_1$ loss and the linear loss can be viewed as particular pinball loss functions with $(\tau = 0,~c=0)$ and $(\tau = -1,~c=0)$, respectively.
In other words, $L_{\tau,c}(t)$ provides a bridge from the one-sided $\ell_1$ loss to the linear loss.

In this paper, we will use the pinball loss to establish two convex models to recover signals from 1bit observations.
The first model contains the pinball loss, the $\ell_1$-norm regularization term, and the $\ell_2$-norm ball constraint.
Since both the $\ell_1$-norm and the $\ell_2$-norm are considered, we name it as \emph{Elastic-net Pinball loss model (EPin)}.
For the second model, we put the $\ell_1$-norm term into the constraint and then name it as EPin with sparsity constraint (\emph{EPin-sc}).
To efficiently solve them, the dual problems are derived, and the corresponding dual coordinate ascent algorithms are given.
These algorithms are proved to converge to the optima of the primal problems, and their effectiveness is evaluated on numerical experiments.

This paper is organized as follows.
A brief review on existing 1bit-CS methods is given in Section~\ref{sec:2}.
Section~\ref{sec:4} introduces the pinball loss and then proposes EPin. An efficient algorithm is designed as well.
The discussion on EPin-sc is given in Section~\ref{sec:5}.
The proposed methods are then evaluated on numerical experiments in Section~\ref{sec:6}, showing the performance of the pinball loss in 1bit-CS.
A conclusion is given to end this paper in Section~\ref{sec:7}.

\section{Review on 1bit-CS Models}\label{sec:2}
Let $\vU = [\vu_1, \vu_2, \ldots, \vu_m]$ and $\vy = [y_1, y_2, \ldots, y_m]^\top$ stand for the sensing system and the measurements, respectively. Denote $\vy\circ (\vU^\top \vx)$ as the vector with components $\{y_i(\vu_i^\top\vx)\}$.

In order to efficiently recover the sparse signal in 1bit-CS, the $\ell_0$ penalty is replaced by the $\ell_1$ norm as in regular compressive sensing~\cite{boufounos20081bit,laska2011trust}.
In order to pursue the convexity, the non-convex sphere constraint $\|\vx\|_{2} = 1$ is replaced by a convex constraint in~\cite{plan2013one}, and a convex model is established as follows:
\begin{eqnarray}\label{1bit-l1-convex}
\Min_{\vx \in \reals^n} & & \|\vx\|_{1} \\
\st & &  \|\vU^\top \vx\|_{1} = \beta, ~ \vy\circ (\vU^\top\vx)\geq 0 \nonumber,
\end{eqnarray}
where $\beta$ is a given positive constant.
Note that~\eqref{1bit-l1-convex} can be reformulated as a linear programming problem because the first constraint $\|\vU^\top \vx\|_{1}=\beta$ becomes $\sum_{i=1}^m y_i (\vu_i^\top \vx)=\beta$ if the second constraint is satisfied.
However, its solution is not necessarily located on the unit sphere. Hence one needs to project the solution onto the unit sphere, and the projected solution is independent of $\beta$.

As we mentioned before, the constraint $\vy\circ (\vU^\top\vx)\geq 0$ assumes the noiseless case, i.e., there is no sign changes in $\vy$. To deal with noise and sign flips, one replaces the constraint $\vy\circ (\vU^\top \vx) \geq 0$ by a loss function.
Using the one-sided $\ell_1$ loss,~\cite{jacques2013robust} introduces the following robust model:
\begin{eqnarray}\label{BIHT1}
\Min_{\vx\in\reals^n} & & \frac{1}{m} \sum_{i=1}^m L_{0,0} (-y_i (\vu_i^\top \vx)) \\
\st & & \|\vx\|_{0} = K, \|\vx\|_{2} = 1, \nonumber
\end{eqnarray}
where $K$ is the number of non-zero components in the true signal.
Then \emph{Binary Iterative Hard Thresholding with a one-sided $\ell_1$-norm} (BIHT) is proposed to solve it approximately.
Modifications of BIHT for sign flips are designed in~\cite{yan2012robust} to improve its robustness to sign flips.
There are also several ways to deal with sign changes because of noise:~\cite{bahmani2013robust} uses maximum likelihood estimation;~\cite{Fang2014201} uses a logistic function;~\cite{dai2014noisy} uses a robust one-sided $\ell_0$ penalty.

Note problem~\eqref{BIHT1} is non-convex, and BIHT only approximately solves it.
To get a convex model, the unit sphere constraint $\|\vx\|_{2} = 1$ is relaxed to the unit ball constraint $\|\vx\|_2\leq 1$, and the sparsity constraint $\|\vx\|_0 =  K$ is replaced by an $\ell_1$ constraint $\|\vx\|_1\leq s$.
Moreover, the one-sided $\ell_1$ loss is replaced by a linear loss to avoid the trivial zero solution, and minimizing the linear loss can be explained as maximizing the correlation between $y_i$ and $\vu_i^\top \vx$.
With those modifications,~\cite{plan2013robust} gives the following convex model for robust 1bit-CS:
\begin{eqnarray}\label{Plan}
\Min_{\vx\in\reals^n} & & \frac{1}{m} \sum_{i=1}^m\nolimits L_{-1,0}(-y_i (\vu_i^\top \vx)) \\
\st & &  \|\vx\|_{1} \leq s, ~\|\vx\|_{2} \leq 1, \nonumber
\end{eqnarray}
where $s$ is a given positive constant.

One can also put the $\ell_1$-norm in the objective function.
The corresponding problem is given in~\cite{zhang2014efficient}:
\begin{eqnarray}\label{Zhang}
\Min_{\vx\in\reals^n} & & \mu \|\vx\|_{1} + \frac{1}{m} \sum_{i=1}^m L_{-1,0}(-y_i (\vu_i^\top \vx)) \nonumber \\
\st & &  \|\vx\|_{2} \leq 1,
\end{eqnarray}
where $\mu$ is the regularization parameter for the $\ell_1$-norm.
In the rest of this paper, we call~\eqref{Plan} \emph{Plan's model} and~\eqref{Zhang} \emph{the passive model}.
Both problems~\eqref{Plan} and~\eqref{Zhang} are convex, and there is a closed-form solution for~\eqref{Zhang}.

{Similar to regular compressive sensing, suitable nonconvex penalties can be used in (\ref{Plan}) or (\ref{Zhang}) to replace the $\ell_1$-norm to enhance the sparsity. For example, smoothly clipped absolute deviation \cite{fan2001variable} and minimax concave penalty \cite{zhang2010nearly} are discussed in~\cite{zhu2015towards} for 1bit-CS. In addition, fast algorithms with analytical solutions for positive homogeneous penalties is recently given by \cite{huang2018nonconvex}. The use of nonconvex penalties can enhance the sparsity and has shown promising performance when there are only a few measurements. However, nonconvex penalties for 1bit-CS are currently restricted to linear loss, due to the computational effectiveness.}

\section{Pinball Loss Minimization with Elastic-net} \label{sec:4}
\subsection{Pinball loss and EPin}
In robust 1bit-CS models, the loss function plays an important role. Intuitively, the loss function can be explained as a penalty given to the inconsistency of $y_i$ and $\sgn(\vu_i^\top \vx)$. Plan's model, the passive model, and BIHT have the same loss when $y_i \neq \sgn(\vu_i^\top \vx)$, but there is a big difference for a measurement that has a correct sign, i.e., $y_i (\vu_i^\top \vx) > 0$. In that case, BIHT, which applies the $\ell_1$-sided loss, does not give any penalty but Plan's model and the passive model, which use the linear loss, give a gain (negative penalty) to encourage a larger $y_i (\vu_i^\top \vx)$.

In this paper, we consider the trade-off between the linear loss and the $\ell_1$-sided loss. Specifically, when $y_i (\vu_i^\top \vx)$ is negative, we give a penalty as the existing losses and when $y_i (\vu_i^\top \vx)$ is large enough, we still give a gain but with a relatively small weight. Mathematically, this kind of loss is formulated as the pinball loss defined in (\ref{pinball}). The parameter $|\tau|$ describes the ratio of the weights for $y_i (\vu_i^\top \vx) > c$ and $y_i (\vu_i^\top \vx) \leq c$. The one-sided $\ell_1$-norm does not care about the samples with the correct signs, then $\tau=0$; the linear loss gives the equal emphasis on all the samples, thus, $\tau=-1$. Note that we have an additional parameter $c$: the changing point for the large and the small penalty.

Applying the pinball loss in 1bit-CS, we propose the following model:
\begin{eqnarray}\label{elastic-net-primal}
 \min\limits_{\vx} & &  P(\vx) \triangleq \mu \|\vx\|_{1} + \frac{1}{m} \sum_{i=1}^m L_{\tau,c}(- y_i (\vu_i^\top \vx)) \nonumber \\
 \st & &  \|\vx\|_{2} \leq 1.
\end{eqnarray}
Here the parameter $\mu$ is used to balance the regularization and the loss terms.
We name~\eqref{elastic-net-primal} \emph{Elastic-net Pinball loss model (EPin)} because it involves both the $\ell_1$ and the $\ell_2$-norms.
When $\tau = -1$, the pinball loss becomes the linear loss, and EPin reduces to the passive model~\eqref{Zhang}, for which there is a closed-form solution.
When $\tau>-1$, analytic solutions are not available, and we will introduce its dual problem and then a dual coordinate ascent method.

Before discussing the dual problem and the algorithm, we here numerically show the performance of the pinball loss minimization. The underlying signal, denoted by $\bar \vx$, has $n$ components with $K$ non-zero ones. Non-zero components are first generated following the standard Gaussian distribution, and then are normalized such that $\|\bar \vx\|_{2} = 1$. We take $m$ binary observations with measurement vector $\vu$ drawn from the standard Gaussian distribution. Throughout the numerical experiments, we use Gaussian noise and the noise level is measured by the ratio of the variance of $\varepsilon$ to that of $\vu^\top \bar \vx$, denoted by $s_n$. Moreover, there could be sign flips, of which the ratio is denoted by $r_f$. Suppose that the recovered signal is $\tilde \vx$, and then the Signal-to-Noise Ratio (SNR) in dB, defined below,
\begin{equation}\label{snr}
\mathrm{SNR}_\mathrm{dB}(\bar \vx, \tilde \vx) = 10 \log_{10} \left( {\|\bar \vx\|^2_{2}}\big /{\|\bar \vx - \tilde \vx\|^2_{2}} \right),
\end{equation}
is used to measure the recovery quality.

To investigate the role of the bias term $c$, we choose $r_f = 10\%$ and $s_n = 10$, but vary $c$ from $0$ to $1.5$. First, we choose $\tau = 0$.
The average SNR over $200$ trials is plotted in Fig.\ref{fig-example-2:a}.
This experiment shows the importance of using a non-zero $c$ for $\tau = 0$. Simply minimizing the one-sided $\ell_1$ loss has no capability to recover the signal for small $c$, and a non-convex constraint is needed, like $\|\vx\|_2 = 1$ used in (\ref{BIHT1}).
In Fig.\ref{fig-example-2:b}, we display the performance for different $c$ values when $\tau = -0.5$. The two figures imply that the performance with a large $c$ is similar. Especially, with further tuning $\mu$, there is little difference for different $c$ values when $c$ is large enough. In the rest, we choose $c = 1$. Another important parameter is $\mu$, which is suggested in \cite{zhang2014efficient} to be $\sqrt{\log(n)/m}$ when $\tau = -1$. For other $\tau$ values, this setting is not necessarily optimal but it at least implies a reasonable range. In this paper, we will use cross-validation to tune it around $\sqrt{\log(n)/m}$.

\begin{figure}[htbp]
  \centering
  \subfigure[]{
    \label{fig-example-2:a} 
    \includegraphics[width=0.48\linewidth]{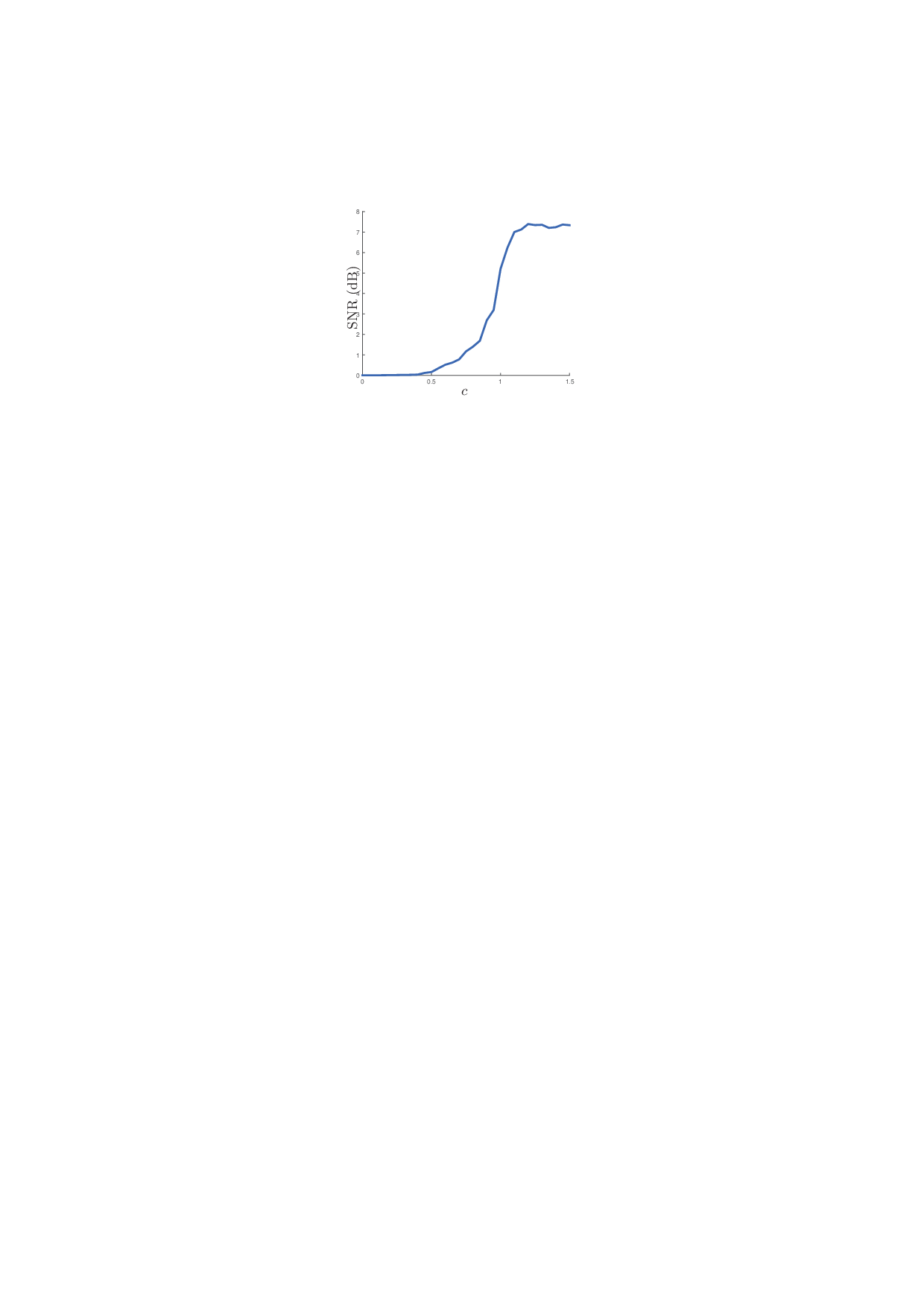}}
  \subfigure[]{
    \label{fig-example-2:b} 
    \includegraphics[width=0.48\linewidth]{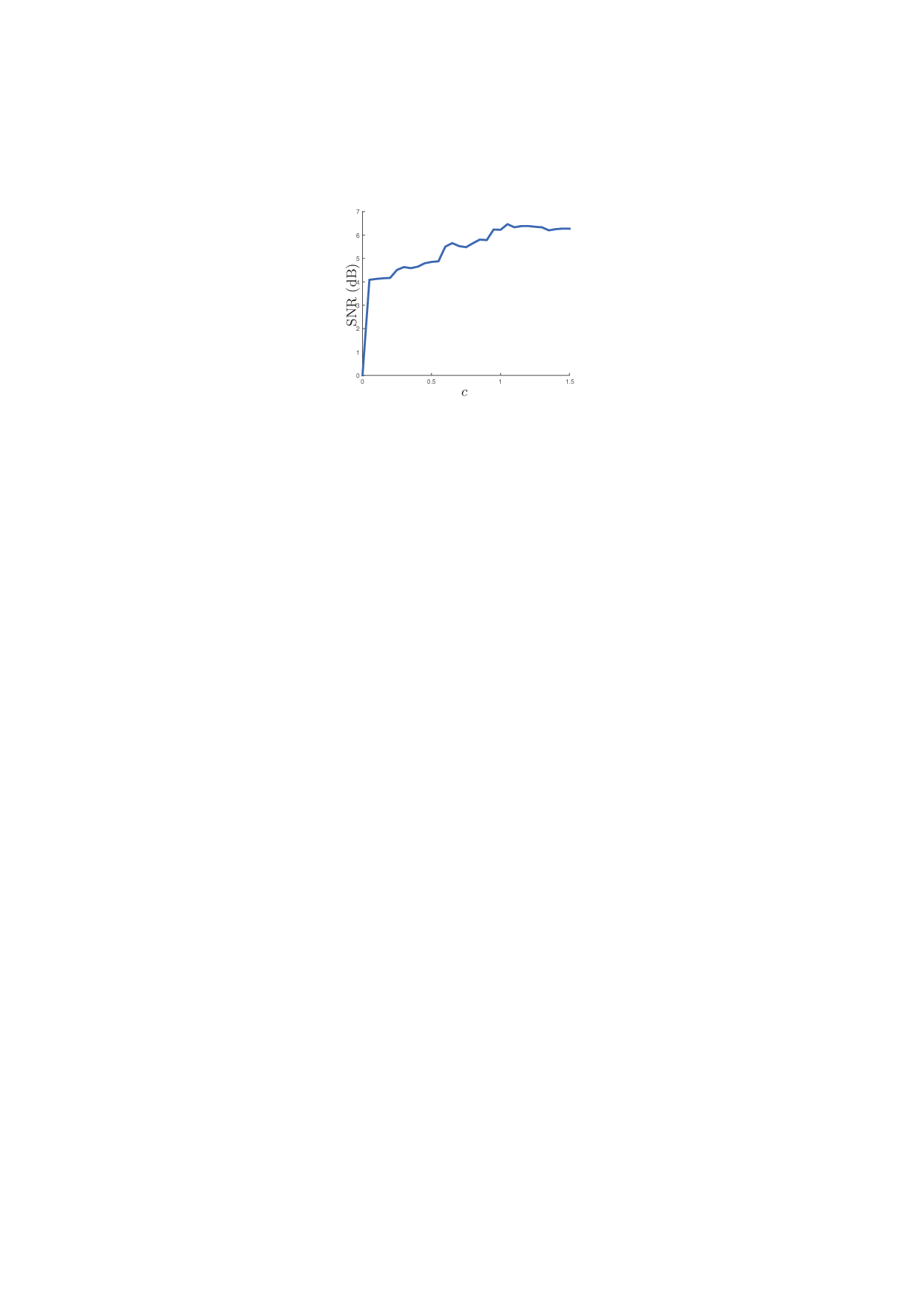}}
  \caption{Average SNR of EPin for different $c$ values with $m = 500, n = 1000$. In this experiment, $\mu = \sqrt{\log(n)/m}$ and the observations are corrupted by Gaussian noise with $s_n = 10$ and sign flips with $r_f = 10\%$. (a) $\tau = 0$ (this also could be regarded as a modification from the passive model with an additional bias); (b) $\tau = -0.5$.}\label{fig-example-2}
\end{figure}

In Fig.\ref{fig-tau-mu}, the average SNR for different $\tau$ and $\mu$ is displayed. As mentioned previously, $\tau = -1$ corresponds to the linear loss employed in the passive model, for which  $\mu = \sqrt{\log(n)/m}$ is suggested by \cite{zhang2014efficient}. The results imply that suitably selecting  $\tau$ and $\mu$ can improve the recovery performance by about 2dB for this case. The improvement amplitude depends on the number of measurements, the sparsity level, and the noise level.

\begin{figure}[htbp]
  \centering
  \subfigure[]{
    \label{fig-tau-mu:a} 
    \includegraphics[width=0.48\linewidth]{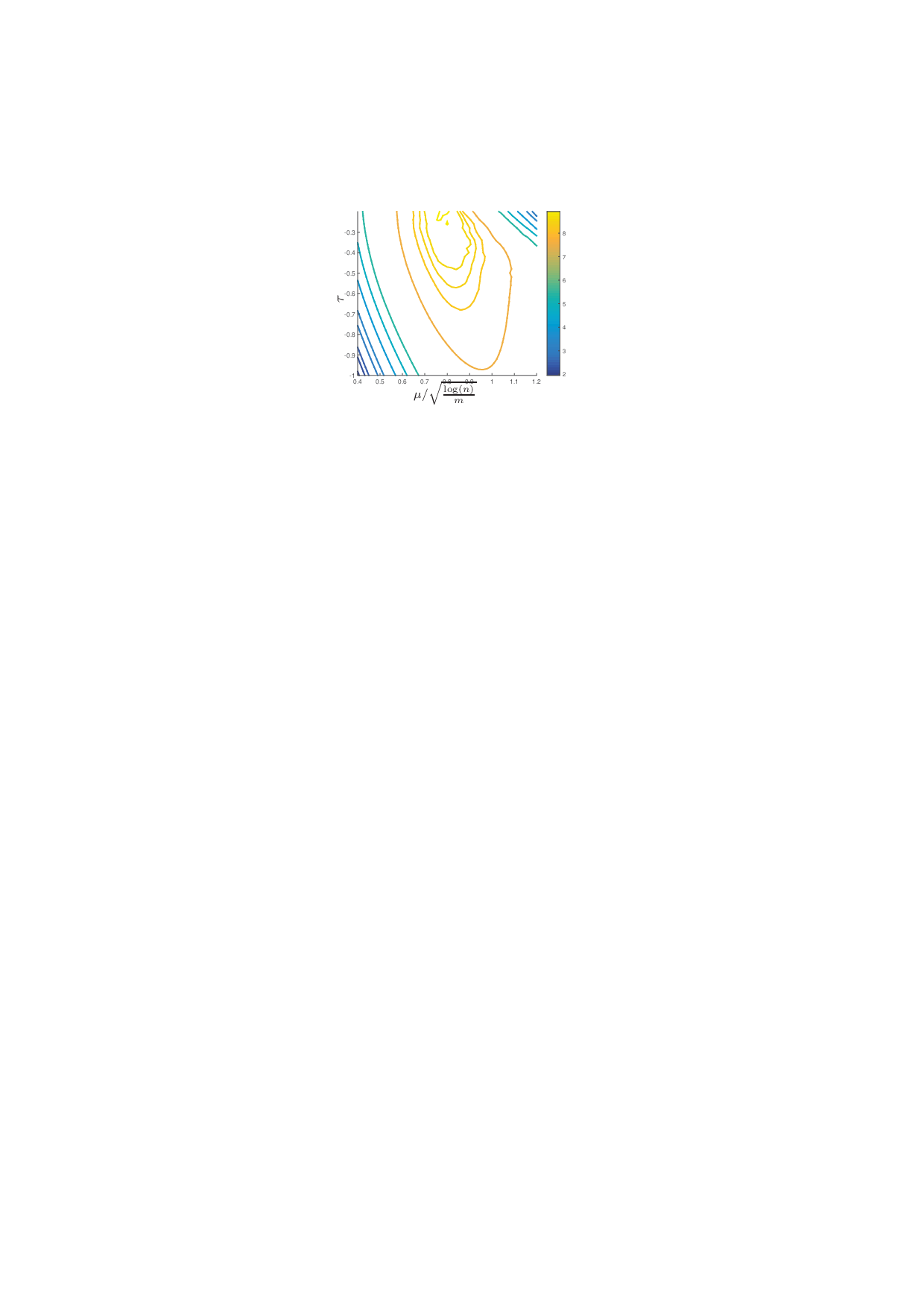}}
  \subfigure[]{
    \label{fig-tau-mu:b} 
    \includegraphics[width=0.48\linewidth]{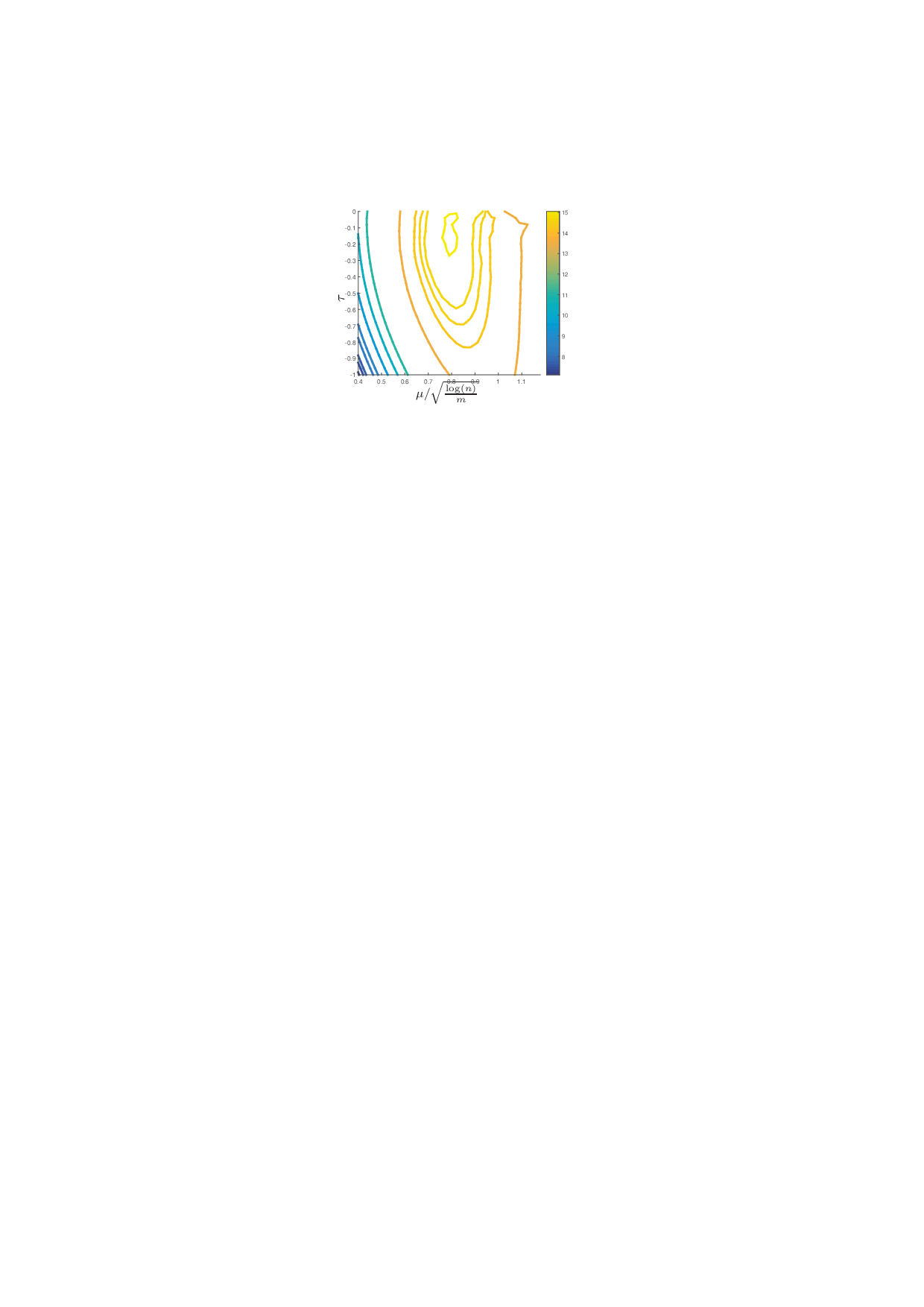}}
  \caption{Average SNR of EPin for different $\tau$ and $\mu$. In this experiment, $n = 1000, K = 10$ and the observations are corrupted by Gaussian noise with $s_n = 10$ and sign flips with $r_f = 10\%$. (a) $m = 500$; (b) $m = 2000$.}\label{fig-tau-mu}
\end{figure}

\subsection{Dual problem}
In order to obtain the dual problem of Epin, we reformulate~\eqref{elastic-net-primal} as:
\begin{equation}\label{elastic-net-primal-1}
\begin{aligned}
\Min_{\vx,\ve, \vz}\quad  & \textstyle \mu \|\ve\|_1 + \frac{1}{m} \sum_{i=1}^m L_{\tau,c}(z_i)+\iota_2(\vx)  \\
\st \quad  & \vx=\ve, ~~  - \vy\circ(\vU^\top\vx) = \vz,
\end{aligned}
\end{equation}
where $\iota_2(\vx)$ has value 0 if $\|\vx\|_2\leq 1$ and $+\infty$ otherwise.
Let $\vs\in\reals^n$ and $\vt\in\reals^m$.
Then the corresponding Lagrangian function is
\begin{align*}
 { \cal{L}}(\vx,\ve, \vz, \vs,\vt)
 ~ = ~ & \textstyle\mu \|\ve\|_1 + \frac{1}{m}\sum_{i=1}^m L_{\tau,c}(z_i) + \iota_2(\vx) \\
    & + \vs^\top(\vx-\ve)+\vt^\top ( - \vy\circ(\vU^\top\vx)-\vz) .
\end{align*}
Minimizing over primal variables $\vx, \ve, \vz$, we have:
\begin{align*}
& \min_{\vx} \; \iota_2(\vx) + \vs^\top \vx - \vt^\top(\vy\circ (\vU^\top\vx)) \\
&  ~~~~~~~~~~~~~~~~~~~~  =   \textstyle -\left\| \sum_{i=1}^m t_i y_i \vu_i - \vs \right\|_{2},
\end{align*}
\begin{align*}
& \min_{\ve} \; \mu \|\ve\|_1 - \vs^\top \ve  =  \; \textstyle \left \{
\begin{array}{ll}
0,       & \mathrm{if}~~ \|\vs\|_{\infty} \leq \mu, \\
-\infty, & \mathrm{otherwise},
\end{array}
\right.\\
& \min_{z_i} \; \frac{1}{m} L_{\tau,c}(z_i) - t_iz_i  =  \; \textstyle \left \{ {
\begin{array}{ll}
ct_i,       & \mathrm{if}~~ -\frac{\tau}{m} \leq t_i \leq \frac{1}{m}, \\
-\infty, & \mathrm{otherwise}.
\end{array}
} \right.
\end{align*}
The dual problem of~\eqref{elastic-net-primal-1}, i.e., $\max_{\vs, \vt}\limits \min_{\vx, \ve, \vz}\limits { \cal{L}}(\vx,\ve, \vz, \vs,\vt)$, is
\begin{equation}
\label{elastic-dual}
\begin{aligned}
\Max_{\vs, \vt}\quad &\textstyle   D(\vs, \vt) \triangleq c\sum_{i=1}^m t_i  - \left\| \sum_{i=1}^m t_i y_i \vu_i - \vs \right\|_{2}    \\
\st\quad &  \|\vs\|_\infty \leq \mu, ~~  -\frac{\tau}{m} \leq \vt \leq \frac{1}{m}.
\end{aligned}
\end{equation}
From the optimal dual variables $\vs^*, \vt^*$, we can easily find an optimal $\vx^*$ for (\ref{elastic-net-primal}):
\begin{enumerate}
\item If $\sum_{i=1}^m t^*_i y_i \vu_i - \vs^* \neq {\bf 0}$, the optimal $\vx^*$ can be obtained as
\begin{equation*}\label{primal-dual}
\vx^* = \textstyle\left( \sum_{i=1}^m t^*_i y_i \vu_i - \vs^* \right)/\left\| \sum_{i=1}^m t^*_i y_i \vu_i - \vs^* \right\|_{2}.
\end{equation*}
\item If $\sum_{i=1}^m t^*_i y_i \vu_i - \vs^* = {\bf 0}$, the optimal $\vx^*$ is not necessarily unique, and any $\vx^*$ satisfying the conditions below is optimal.
\begin{subequations}\label{sol_con}
\small
\begin{align}
\|\vx^*\|_{2}\leq 1,~~&\label{sol_cons_a}\\
x^*_j = 0, ~~& \mbox{ if } |s^*_j|<\mu,\label{sol_cons_b}\\
x^*_j \geq 0, ~~ & \mbox{ if } s^*_j =\mu,\\
x^*_j \leq 0, ~~ & \mbox{ if } s^*_j = -\mu,\label{sol_cons_d}\\
c  - y_i (\vu_i^\top \vx^*) \geq 0, ~~ & \mbox{ if } t^*_i={1/m}, \label{sol_cons_e}\\
c  - y_i (\vu_i^\top \vx^*) \leq 0, ~~ & \mbox{ if } t^*_i={-\tau/m},\label{sol_cons_f}\\
c  - y_i (\vu_i^\top \vx^*) = 0, ~~ & \mbox{ if } t^*_i\in ({-\tau/m},1/m). \label{sol_cons_g}
\end{align}
\end{subequations}
\end{enumerate}
\noindent {\bf Remark:} When $\tau=-1$, any $\vx^*$ satisfying~\eqref{sol_cons_a}-\eqref{sol_cons_d} is optimal. This generalizes the result for the passive model~\cite[Lemma 1]{zhang2014efficient}.

Let us define two hypercubes for $\vz\in\reals^n$:
\begin{eqnarray*}\textstyle
& & {\cal A} = \left\{\vz = \sum_{i=1}^m t_i y_i \vu_i \Big | -\frac{\tau}{m} \leq \vt \leq \frac{1}{m}\right\}, \\
& & {\cal B} = \left\{\vz \Big | -\mu \leq \vz \leq \mu\right\}.
\end{eqnarray*}
If ${\cal A} \bigcap {\cal B} = \emptyset$, then the optimal $\vx^*$ will always be on the unit sphere. The case for ${\cal A}\cap{\cal B}\neq\emptyset$ is more complicated:
If $c=0$, the optimal dual objective is $0$, and the primal objective becomes zero when $\vx=\bf 0$, so $\bf 0$ is optimal to the primal problem~\cite{zhang2014efficient}. However, if $c>0$, we may still have $\left\|\sum_{i=1}^m t^*_i y_i \vu_i\right\|_\infty>\mu$ and $\vx^*$ is still on the unit sphere.

In order to get the optimal $\vx^*$ on the unit sphere, we can choose a small $\mu$ because a smaller $\mu$ leads to a smaller $\cal{B}$, which then can lead to an empty $\cal{A}\cap\cal{B}$.

\subsection{Dual coordinate ascent algorithm}
The motivation of solving EPin from the dual space instead of directly solving (\ref{elastic-net-primal}) is that the constraints in (\ref{elastic-dual}) are not coupled, which allows us to design a coordinate update algorithm. The subproblems of dual variables are:

1) $s_j$-subproblem: $D(\vs,\vt)$ is separable with respect to $\vs$, and $s_j$ can be computed in parallel via
\begin{equation}\label{beta-update-solution}
\textstyle s_j = \max \left\{ -\mu, \min\left\{\mu, \left(\sum_{i=1}^m t_i y_i \vu_i\right)_j \right\}\right\}.
\end{equation}

2) $t_i$-subproblem: Let us consider updating $t_i$ to $t_i + d_i$. It is a univariate optimization problem on $d_i$:
\begin{equation}\label{di_problem}
\textstyle \Max\limits_{-\frac{\tau}{m} \leq t_i + d_i \leq \frac{1}{m}} c d_i  - \left\| y_i \vu_i d_i + \sum_{i=1}^m t_i y_i \vu_i - \vs \right\|_{2}.
\end{equation}
Denote $\vw = \sum_{i=1}^m t_i y_i \vu_i - \vs$. Problem~\eqref{di_problem} becomes
\begin{equation*}
\Max_{-\frac{\tau}{m} \leq t_i + d_i \leq \frac{1}{m}} c d_i  - \sqrt{\|\vu_i\|_2^2d^2_i+2  y_i\vu_i^\top\vw d_i +\|\vw\|_2^2},
\end{equation*}
and its optimal solution $d_i^*$ can be calculated as follows:
\begin{itemize}
\item If $\|\vu_i\|_2\leq c$, the objective function is non-decreasing. We have that $d_i^* = 1/m-t_i$ is optimal and update $t_i$ to be $1/m$.
\item If $\|\vu_i\|_2>c$, we define $ a_d  =  \|\vu_i\|^2_{2}(\|\vu_i\|^2_{2} - c^2)$, $b_d  =  2 (\|\vu_i\|^2_{2} - c^2) y_i\vu_i^\top \vw$, $c_d  =  (\vu_i^\top \vw)^2 - c^2\|\vw\|^2_{2}$, then there is
\begin{equation}\label{xi-update-solution}
\textstyle d_i^* = \max\left\{ -\frac{\tau}{m}- t_i, \min\left\{ \frac{1}{m} - t_i, \bar d_i\right\} \right\},
\end{equation}
where
\begin{equation*}
\textstyle\bar d_i = \frac{-b_d + \sqrt{b_d^2 - 4a_dc_d}}{2a_d}.
\end{equation*}
\end{itemize}

Summarizing the previous discussion, we give the dual coordinate ascent method for~\eqref{elastic-net-primal} in Alg.\ref{EPin-algorithm}, which is fast because each subproblem has an analytical solution. Moreover, the next theorem states that its output is optimal.
\begin{algorithm}[h]
\caption{Dual coordinate ascent for EPin}\label{EPin-algorithm}
\Indp   Set $l := 0,\vs^0 := {\bf 0}_{n \times 1}, \vt^0 := -\frac{\tau}{m}{\bf 1}_{m \times 1}$; \\
Calculate $\vw := \sum_{i=1}^m t^0_i y_i \vu_i - \vs^0$;\\
\Repeat{$\vt^{l} = \vt^{l-1}$}
{
\For{ $i = 1, 2, \ldots, m$}
{

\eIf{$c \geq \|\vu_i\|_{2}$}
{
 $d^*_i := \frac{1}{m} - t_i^l$;
}
{
 Calculate  $d^*_i$ by (\ref{xi-update-solution});
}
 $\vw := \vw + y_i \vu_i d^*_i$;

 $t^{l+1}_i := t_i^l + d^*_i$;
}
 Calculate $\vs^{l+1}$ by (\ref{beta-update-solution}) and update $\vw := \vw + \vs^l - \vs^{l+1}$;  $l: = l + 1$;\\
}
\eIf{$\|\vw\|_2 >0$}
{
$\vx := {\vw\over\|\vw\|_2}$;
}
{
Find $\vx$ that satisfies~\eqref{sol_con};
}
\end{algorithm}

\begin{theorem}\label{thm-2}
The dual coordinate ascent for EPin (Alg。~\ref{EPin-algorithm}) converges to an optimal solution of~\eqref{elastic-net-primal}.
\end{theorem}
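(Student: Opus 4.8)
The plan is to establish the theorem in two stages: first, that Algorithm~\ref{ep-SVM-algorithm} converges to an optimal solution $(\beta^*,\xi^*)$ of the dual problem~\eqref{elastic-dual}; second, that the recovery rule in the last lines of the algorithm — returning $\vx = \vw/\|\vw\|_2$ when $\|\vw\|_2>0$, or an $\vx$ satisfying~\eqref{sol_con} otherwise — produces a primal optimal point for~\eqref{elastic-net-primal}. Since the second stage is exactly the primal-dual recovery already spelled out before Remark~1 (the two cases there coincide with the two branches of the algorithm, using $\vw = \sum_i\xi_i^* y_i\vu_i-\beta^*$), the bulk of the work is in the first stage: showing that block coordinate ascent on~\eqref{elastic-dual} actually attains the dual maximum.

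First I would record the structural features of the dual that make coordinate ascent well-behaved. The feasible set is a box (a product of intervals in $\beta$ and in $\xi$), hence closed, convex, and — once we note that $D(\beta,\xi)\to-\infty$ as $\|\beta\|\to\infty$ is not an issue since $\beta$ is bounded — the feasible set is compact, so a maximizer exists. The objective $D(\beta,\xi) = c\sum_i\xi_i - \|\sum_i\xi_i y_i\vu_i - \beta\|_2$ is concave (an affine function minus a norm of an affine function) but not strictly concave and not differentiable where its argument vanishes. I would then verify that each subproblem is solved exactly: the $\beta_j$-update~\eqref{beta-update-solution} is the exact maximizer of the separable $\beta$-block (projection of the unconstrained optimum onto $[-\mu,\mu]$), and the $\xi_i$-update is the exact maximizer of a one-dimensional concave problem, handled analytically by the case split on $\|\vu_i\|_2$ versus $c$ and formulas~\eqref{xi-update-solution}–\eqref{di-solution}. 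So the algorithm is genuine exact block coordinate ascent, and $D$ is nondecreasing along iterates; being bounded above, $D(\beta^l,\xi^l)$ converges.

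Next I would pass from monotone convergence of the objective values to optimality of the limit points. Because the iterates live in a compact set, take a convergent subsequence with limit $(\bar\beta,\bar\xi)$. The subtlety is the classical one for coordinate ascent with a nonsmooth objective: coordinatewise optimality does not in general imply global optimality. Here, however, the nonsmoothness of $D$ comes only through the single term $-\|\vw\|_2$ with $\vw = \sum_i\xi_i y_i\vu_i-\beta$, which is a nonsmooth convex function composed with a \emph{shared} affine map; it is not separable across blocks, and this is precisely the situation (cf.\ Tseng's results on block-coordinate descent for functions of the form smooth-plus-separable-nonsmooth) where block coordinate ascent can fail. I would handle this by exploiting the specific structure rather than invoking a black-box theorem: when $\|\vw^*\|_2>0$ the objective is differentiable at the limit in \emph{every} variable simultaneously, so coordinatewise stationarity on the box yields the KKT conditions for~\eqref{elastic-dual}, giving global optimality by concavity; when $\|\vw^*\|_2=0$ one checks directly, using $\|\beta\|_\infty\le\mu$ and the box on $\xi$, that the directional derivative of $D$ in every feasible direction is $\le 0$ at the limit, again yielding optimality. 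An alternative, cleaner route — which I would present if the direct argument gets unwieldy — is to observe that the termination criterion is $\xi^l=\xi^{l-1}$ (and then $\beta^{l+1}=\beta^l$), i.e.\ the algorithm only stops at a coordinatewise fixed point, and to prove that every such fixed point satisfies the KKT system of~\eqref{elastic-dual} by writing out the optimality condition of each subproblem and assembling them.

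The main obstacle, then, is the nonsmooth degenerate case $\|\vw^*\|_2 = 0$ (equivalently $\|\sum_i\xi_i^* y_i\vu_i\|_\infty\le\mu$): there the dual objective has a kink at the optimum, coordinatewise optimality is weakest, and simultaneously the primal solution is nonunique, recovered only implicitly through~\eqref{sol_con}. I would argue that in this regime the dual optimal value is simply $c\sum_i\xi_i^*$ maximized over the box subject to $\|\sum_i\xi_i y_i\vu_i\|_\infty\le\mu$, show the algorithm's fixed point attains this (each $\xi_i$ is pushed to its upper end $1/m$ unless blocked by feasibility, which is exactly the coordinatewise condition for maximizing $c\sum_i\xi_i$), and then invoke the primal-dual correspondence established before Remark~1 — in particular Remark~1 itself covers the sub-case $\|\sum_i \tfrac1m y_i\vu_i\|_\infty\le\mu$ — to conclude that the returned $\vx$ solving~\eqref{sol_con} is primal optimal. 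Strong duality for~\eqref{elastic-net-primal} (a convex program with a Slater point, e.g.\ $\vx=\bf 0$ strictly inside $\|\vx\|_2\le1$) closes the loop: primal optimality of the recovered $\vx$ follows from equality of primal and dual optimal values together with primal feasibility and the complementarity relations~\eqref{sol_cons_a}–\eqref{sol_cons_g}.
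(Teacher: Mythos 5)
Your overall architecture --- exact block coordinate ascent on the dual \eqref{elastic-dual}, coordinatewise optimality of the terminal point, and translation to primal optimality of \eqref{elastic-net-primal} via the recovery rules preceding Remark~1 --- is the same as the paper's, and your treatment of the non-degenerate case $\vw^*\neq\mathbf{0}$ (the objective $D$ is differentiable there, so coordinatewise stationarity on the box yields the KKT system and concavity gives global optimality) reproduces the paper's Case~1 with more standard scaffolding. The additional care about compactness, monotonicity of $D$ along iterates, and limit points is material the paper omits and is welcome.

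The gap is in the degenerate case $\vw^*=\mathbf{0}$, which you correctly single out as the main obstacle but then resolve with a step that fails. At $\vw=\mathbf{0}$ the $\xi_i$-subproblem objective is $c\,d_i-|d_i|\,\|\vu_i\|_2$, which for $\|\vu_i\|_2>c$ is maximized at $d_i=0$ (indeed \eqref{di-solution} gives $\bar d_i=0$ since $B_d=C_d=0$): the coordinate update leaves $\xi_i$ wherever it happens to be; it does \emph{not} push it to $1/m$ ``unless blocked by feasibility.'' Consequently a coordinatewise fixed point with $\vw=\mathbf{0}$ need not maximize $c\sum_i\xi_i$ over $\{\xi:\|\sum_i\xi_iy_i\vu_i\|_\infty\le\mu\}$ intersected with the box, and the directional-derivative check you propose would actually fail there: if $\xi_i^*<1/m$ and $\beta^*$ is interior to $[-\mu,\mu]^n$ in the coordinates where $\vu_i$ is supported, the joint direction $(\Delta\beta,\Delta\xi)=(y_i\vu_i,e_i)$ is feasible and gives directional derivative $c-\|y_i\vu_i-\Delta\beta\|_2=c>0$. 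This is precisely the non-separable-nonsmooth failure mode of coordinate ascent that you yourself flagged, so it cannot be dispatched by ``checking directly''; closing it requires either an argument that the algorithm's specific dynamics (initialization $\xi^0=-\frac{\tau}{m}\mathbf{1}$, $\beta$ updated by clipping after each sweep) never terminate at such a point, or a restriction to the situation of Remark~1 where $\xi^*=\frac1m\mathbf{1}$. For comparison, the paper's own Case~2 asserts implications (e.g.\ $\xi_i^*=1/m\Rightarrow\|\vu_i\|_2\le c$) that likewise do not follow from coordinatewise optimality alone, so this is a weak point of the published argument as well --- but your proposal inherits it rather than repairs it.
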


\begin{proof} Suppose that $\vx^*$ is the output of Alg.~\ref{EPin-algorithm} and $\vs^*, \vt^*$ are the corresponding coordinate optimum for (\ref{elastic-dual}). We are going to prove that $\vx^*$ is optimal to (\ref{elastic-net-primal}). This proof considers two different cases:

{\bf Case 1} ($\vw\neq \bf 0$): We have $\|\vx^*\|_2=1$ and the algorithm shows that $\{s_j^*\}$ and $\{t_i^*\}$ are coordinate maxima of~\eqref{elastic-dual}. Consider a small change on $t_i$, denoted by $\Delta t_i$, and define the following function
\begin{eqnarray*}
\textstyle h(\Delta t_i) & \triangleq & c \Delta t_i  - \left\| y_i \vu_i \Delta t_i + \vw \right\|_{2},
\end{eqnarray*}
of which the gradient at $\Delta t_i=0$ is
\[
\left.\frac{d h(\Delta t_i)}{d \Delta t_i} \right|_{\Delta t_i=0} = c - \frac{ y_i \vu_i^\top \vw}{\|\vw\|_{2}} = c - y_i(\vu_i^\top \vx^*).
\]
Since $\vt^*$ is the coordinate optimum, $\Delta t_i = 0$ is the maximum of $h(\Delta t_i)$ under the condition that $-\frac{\tau}{m} \leq t^*_i + \Delta t_i \leq \frac{1}{m}$. Thus,
\begin{itemize}
\item if $t^*_i=1/m$, then $y_i(\vu_i^\top\vx^*)\leq c$;
\item if $t^*_i=-\tau/m$, then $y_i(\vu_i^\top\vx^*)\geq c$;
\item if $t^*_i\in(-\tau/m,1/m)$, then $y_i(\vu_i^\top\vx^*)= c$.
\end{itemize}
In other words,
\begin{equation}\label{subgradient-p-2}
-\sum_{i=1}^m t^*_i y_i \vu_i \in \frac{\partial {1\over m}\sum_{i=1}^m L_{\tau,c} ( - y_i (\vu^\top_i \vx)) }{\partial \vx}\Big |_{\vx = \vx^*}.
\end{equation}

From the calculation of $\vs^*$ (c.f.~\eqref{beta-update-solution}), we have:
\begin{itemize}
\item if $-\mu < s^*_j < \mu$, then $w_j = \left(\sum_{i=1}^m t^*_i y_i \vu_i\right)_j - s^*_j = 0$, i.e, $x_j^*= 0$;
\item if $s^*_j = \mu$, then $x_j^*\geq0$;
\item if $s^*_j = -\mu$, then $x_j^*\leq0$;
\end{itemize}
which means that $\vs^* \in \frac{\partial \mu \|\vx\|_{1}}{\partial \vx} \big |_{\vx = \vx^*}$.
Together with (\ref{subgradient-p-2}), we have
\[
\vs^* - \sum_{i=1}^m\nolimits t^*_i y_i \vu_i \in \frac{\partial P(\vx)}{\partial \vx} \Big |_{\vx = \vx^*},
\]
from which it follows that
\[
\vx^* = \frac{\sum_{i=1}^m t^*_i y_i \vu_i - \vs^*}{\|\sum_{i=1}^m t^*_i y_i \vu_i - \vs^*\|_{2}}
\]
is optimal to (\ref{elastic-net-primal}).

{\bf Case 2} ($\vw= \bf 0$): in this case, $\vx^*$ satisfies \eqref{sol_con}, then
\begin{eqnarray*}
P(\vx^*) & = & \textstyle \mu\|\vx^*\|_{1} + \sum_{i=1}^m t^*_i (c - y_i(\vu^\top_i \vx^*)) \\
         & = & \textstyle \mu\|\vx^*\|_{1} - \sum_{i=1}^m t^*_i y_i(\vu^\top_i \vx^*) + c \sum_{i=1}^m t^*_i.
\end{eqnarray*}
Note that $\vw = \sum_{i=1}^m t^*_i y_i\vu_i - \vs^* = {\bf 0}$, we have
\begin{eqnarray*}
\textstyle \sum_{i=1}^m t^*_i y_i(\vu^\top_i \vx^*) & = & \textstyle \left(\sum_{i=1}^m\nolimits t^*_i y_i \vu_i\right)^\top \vx^* 
\end{eqnarray*}
\begin{eqnarray*}
~~~~~~~~~~~~~~~~~~~~~~~~~ & = & \textstyle (\vs^*)^\top \vx^* =\mu\|\vx^*\|_{1},
\end{eqnarray*}
where the last equality comes from \eqref{sol_cons_b}-\eqref{sol_cons_d}. Therefore, we have that
\[
\textstyle P(\vx^*) =  c \sum_{i=1}^m t^*_i = D(\vs^*, \vt^*),
\]
i.e., the duality gap is zero and $\vx^*$ is optimal to (\ref{elastic-net-primal}).
\end{proof}

\noindent {\bf Remark 3:} Both Alg.~\ref{EPin-algorithm} and the proof of Theorem~\ref{thm-2} suggest that if $c\geq \|\vu_i\|_2$ for all $i$, then $t_i^*=1/m$, and EPin reduces to the passive model no matter what $\tau$ is. It happens because $y_i(\vu_i^\top\vx)\leq c$ for all $\vx$ in the $\ell_2$-norm ball. Thus, we choose $c$ to be much smaller than most $\|\vu_i\|_2$.

In practice, we can set a maximum number of iterations $l_{\max}$ and use $\|\vt^{l} - \vt^{l-1}\|_{\infty} < \delta$ as the stopping criterion. Here $\delta$ is a small positive number. In the following experiments, we set $l_{\max} = 500$ and $\delta={(1+\tau)/(100m)}$.

\section{EPin with Sparsity Constraint}\label{sec:5}

In the previous section, we considered the pinball loss minimization with the $\ell_1$-norm regularization and the $\ell_2$-norm constraint. Similarly to Plan's model~\eqref{Plan}, we can put the $\ell_1$-norm term in the constraint when there is prior-knowledge about the $\ell_1$-norm of the true signal. Specifically, the new model is
\begin{equation}\label{EPin-sc-primal}
\begin{aligned}
\Min_{\vx\in\reals^n} \quad &  \frac{1}{m} \sum_{i=1}^m L_{\tau,c}(- y_i (\vu_i^\top \vx))  \\
\st \quad & \|\vx\|_{1} \leq \alpha, ~~~ \|\vx\|_{2} \leq 1,
\end{aligned}
\end{equation}
which is named an \emph{Elastic-net Pinball loss  with sparsity constraint (EPin-sc)}.

When $\tau = -1$, EPin-sc reduces to Plan's model (\ref{Plan}). For Plan's model, there is no efficient algorithm until now and, CVX, one standard convex optimization toolbox \cite{grant2008cvx}, was suggested in \cite{zhang2014efficient} to solve it. In the following, we will establish a dual coordinate ascent algorithm to solve (\ref{EPin-sc-primal}), and this method is also applicable to Plan's model.

To derive the dual problem, we reformulate (\ref{EPin-sc-primal}) as
\begin{equation*}
\begin{aligned}
\Min_{\vx, \ve, \vz} \quad &  \iota_1(\ve) + \frac{1}{m} \sum_{i=1}^m L_{\tau,c}(z_i) + \iota_2(\vx)\\
\st \quad & \vx = \ve, ~~ -\vy\circ(\vU^\top \vx) = \vz,
\end{aligned}
\end{equation*}
where $\iota_1(\ve)$ returns 0 if $\|\ve\|_1\leq \alpha$ and $+\infty$ otherwise.
Then the corresponding Lagrangian function is:
\begin{align*}
 { \cal{L}}(\vx,\ve, \vz, \vs,\vt)
 ~ = ~ & \; \iota_1(\ve) + \frac{1}{m}\sum_{i=1}^m L_{\tau,c}(z_i) + \iota_2(\vx) \\
   & +  \vs^\top(\vx-\ve)+\vt^\top (- \vy\circ (\vU^\top \vx) -\vz) .
\end{align*}
Therefore, the dual problem of (\ref{EPin-sc-primal}) can be derived in the same way as in the previous section:

\begin{equation}
\label{EPin-sc-dual}
\begin{aligned}
\Max_{\vs, \vt}\quad &  \textstyle c\sum_{i=1}^m t_i -\alpha \|\vs\|_{\infty} -  \left\| \sum_{i=1}^m t_i y_i \vu_i - \vs \right\|_{2}    \\
\st\quad &  -\frac{\tau}{m} \leq \vt \leq \frac{1}{m}.
\end{aligned}
\end{equation}
After obtaining the optimal dual variables $\vs^*$ and $\vt^*$, the optimal $\vx^*$ to~\eqref{EPin-sc-primal} can be constructed as follows,
\begin{enumerate}
\item If $\sum_{i=1}^m t^*_i y_i \vu_i - \vs^* \neq {\bf 0}$, the optimal $\vx^*$ is
\begin{equation*}
\textstyle \vx^* = \left( \sum_{i=1}^m t^*_i y_i \vu_i - \vs^* \right){/}\left\| \sum_{i=1}^m t^*_i y_i \vu_i - \vs^* \right\|_{2}.
\end{equation*}

\item If $\sum_{i=1}^m t^*_i y_i \vu_i - \vs^* = {\bf 0}$, the optimal $\vx^*$ is not necessarily unique, and all $\vx^*$ satisfying conditions below are optimal.
\begin{subequations}\label{sol_cons_2}
\small
\begin{align}
\|\vx^*\|_{2}\leq 1,~~&\label{sol_cons_2_a}\\
\|\vx^*\|_{1}\leq s, ~~& \label{sol_cons_2_b}\\
{\vs^*}^\top \vx^* = \alpha \|\vs^*\|_{\infty},~~&\label{sol_cons_2_c}\\
c  - y_i (\vu_i^\top \vx^*) \geq 0, ~~ & \mbox{ if } t^*_i={1/m}, \label{sol_cons_2_d}\\
c  - y_i (\vu_i^\top \vx^*) \leq 0, ~~ & \mbox{ if } t^*_i={-\tau/m},\label{sol_cons_2_e}\\
c  - y_i (\vu_i^\top \vx^*) = 0, ~~ & \mbox{ if } t^*_i\in ({-\tau/m},1/m). \label{sol_cons_2_f}
\end{align}
\end{subequations}
\end{enumerate}

Same as in the previous section, we can update $t_i$ and $\vs$ in turn to efficiently solve (\ref{EPin-sc-dual}). Minimization on $t_i$ is the same as for EPin, i.e., $t^{l+1}_i = t^l_i + d^*_i$, where $d^*_i$ is computed by (\ref{xi-update-solution}).

However, the subproblem on $\vs$, i.e.,
\begin{align}\label{beta-update-solution-2}
\textstyle\Max\limits_\vs \quad -\alpha \|\vs\|_{\infty} - \left\| \sum_{i=1}^m t_i y_i \vu_i - \vs \right\|_{2},
\end{align}
is no longer separable. (\ref{beta-update-solution-2}) can be equivalently written as
\begin{align}\label{beta-update-solution-3}
\textstyle\Min\limits_{\xi,~\vs} ~~ \alpha \xi + \sqrt{\sum_{i=1}^m (v_i - s_i)^2},~~ \mathrm{s.t.}  ~~  |s_i| \leq \xi, \forall i,
\end{align}
where $\vv = \sum_{i=1}^m t_i y_i \vu_i$.
Fix $\xi$, and problem~\eqref{beta-update-solution-3} becomes
\begin{eqnarray*}
\Min\limits_{\vs} &   \sqrt{\sum_{i=1}^m (v_i - s_i)^2},~~~ \mathrm{s.t.}  ~~  |s_i| \leq \xi, \forall i,
\end{eqnarray*}
of which the optimal solution is
\begin{equation}\label{beta-t-v}
\small
s_i = B_{v_i}(\xi) \triangleq \left\{
\begin{array}{ll}
\sgn(v_i)\xi, & |v_i| > \xi,\\
v_i,        & |v_i| \leq \xi.
\end{array}
\right.
\end{equation}
Plugging~\eqref{beta-t-v} into~\eqref{beta-update-solution-3}, we have a problem of $\xi$,
\begin{equation}\label{t-subproblem}
\textstyle\Min\limits_{\xi \geq 0} \quad T(\xi) \triangleq \alpha \xi + \sqrt{\sum_{|v_i| > \xi} (|v_i| - \xi)^2}.
\end{equation}
This is a convex univariate problem, and its optimizer $\xi^*$ either equals to zero or satisfies the first-order optimality condition $T'(\xi^*) = 0$, where
\[
T'(\xi) = \alpha - \frac{\sum_{|v_i| > \xi} (|v_i| - \xi)}{\sqrt{\sum_{|v_i| > \xi} (|v_i| - \xi)^2}}.
\]
Note that $T'(\xi)$ is a piecewise smooth function, of which the segment is given by $[|v_{[k+1]}|, |v_{[k]}|]$. Here, $v_{[k]}$ stands for the $k$-th component of $\vv$ in the order of the absolute value, i.e., $|v_{[n]}| \leq\cdots\leq |v_{[1]}|$. Moreover, $T'(t)$ is an increasing function. So it is easy to find the segment containing the solution of  $T'(\xi) = 0$. Specifically, we select $k^*$ such that
\begin{equation}\label{k-piecewise}
T'\left( |v_{[k^*+1]}|\right) \leq 0 ~~\mathrm{and}~~ T'\left( |v_{[k^* ]} |\right) > 0.
\end{equation}
Then $\xi^*$ is in $\left[|v_{[k^* + 1]}|, |v_{[k^*]}|\right)$, from which it follows that it is the solution to the following quadratic equation:
\begin{eqnarray*}
& & \textstyle (k^* - \alpha^2)k^* \xi^2 - 2(k^* - \alpha^2)\left(\sum_{k = 1}^{k^*}\nolimits |v_{[k]}|\right) \xi \\
  & & \textstyle ~~~~ + \left(\sum_{k = 1}^{k^*}\nolimits |v_{[k]}|\right)^2 - \alpha^2 \left(\sum_{k = 1}^{k^*}\nolimits |v_{[k]}|^2\right) = 0.
\end{eqnarray*}
Thus, the optimizer for (\ref{t-subproblem}) is analytically given by
\begin{equation}\label{optimal_t}
\xi^* = \frac{-b_\xi - \sqrt{b_\xi^2 - 4a_\xi c_\xi}}{2a_\xi},
\end{equation}
with $a_\xi  = (k^* - \alpha^2)k^* , b_\xi  =  -2(k^* - \alpha^2)\left(\sum_{k = 1}^{k^*} |v_{[k]}|\right)$ and $c_\xi = \left(\sum_{k = 1}^{k^*} |v_{[k]}|\right)^2 - \alpha^2 \left(\sum_{k = 1}^{k^*} |v_{[k]}|^2\right)$.
After the optimal $t^*$ is obtained, optimal solution for (\ref{beta-update-solution-2}) can be directly calculated by (\ref{beta-t-v}).

The dual coordinate ascent for EPin-sc is summarized in Alg.~\ref{EPin-algorithm-2}. Its output gives an optimal solution for EPin-sc (\ref{EPin-sc-primal}), as guaranteed by Theorem \ref{thm-3}.

\begin{algorithm}[h]
\caption{Dual coordinate ascent for EPin-sc}\label{EPin-algorithm-2}
\Indp   Set $l := 0,\vs^0 := {\bf 0}_{n \times 1}, \vt^0 := -\frac{\tau}{m}{\bf 1}_{m \times 1}$; \\
Calculate $\vw := \sum_{i=1}^m t^0_i y_i \vu_i - \vs^0$;\\
\Repeat{$\vt^{l} = \vt^{l-1}$}
{
\For{ $i = 1, 2, \ldots, m$}
{

\eIf{$c \geq \|\vu_i\|_{2}$}
{
 $d^*_i := \frac{1}{m} - t_i^l$;
}
{
 Calculate $d^*_i$ by (\ref{xi-update-solution});
}

 $\vw := \vw + y_i \vu_i d^*_i$;

 $t^{l+1}_i := t_i^l + d^*_i$;
}
 Set $\vv := \vw + \vs^l$;

 Select $k^*$ satisfying (\ref{k-piecewise}), calculate $\xi^*$ by (\ref{optimal_t}), and $s^{l+1}_i := B_{v_i}(\xi^*)$; \\
 $l: = l + 1$;\\
}
\eIf{$\|\vw\|_2 >0$}
{
$\vx := {\frac{\vw}{\|\vw\|_2}}$;
}
{
Find $\vx$ that satisfies~\eqref{sol_cons_2};
}
\end{algorithm}

\begin{theorem}\label{thm-3}
Alg. \ref{EPin-algorithm-2} converges to an optimum of (\ref{EPin-sc-primal}).
\end{theorem}

\begin{proof}
Denote the output of Alg.~\ref{EPin-algorithm-2} as $\vx^*$ and the corresponding dual variables as $\vs^*, \vt^*$. Then there is
\[
\textstyle   \vs^* = \argmax\limits_\vs ~ -\alpha \|\vs\|_{\infty} - \left\| \sum_{i=1}^m t_i y_i \vu_i - \vs \right\|_{2}.
\]
Suppose $\bar i = \arg \max_i |s^*_i|$, and let $\Delta \vs$ be a vector of which the $\bar i$-th component takes value $\sgn(\vs^*_i)$ and other components equal to zero. The following function
\[
- \alpha \|\vs^* + t \Delta \vs\|_{\infty} - \left\| \vw - t \Delta \vs \right\|_{2}
\]
has the maximal value at $t = 0$.

In the case $\vw \neq {\bf 0}$, $t = 0$ being the maximum of the above function means that
\[
\alpha + \frac{\vw^\top \Delta \vs}{\|\vw\|_{2}} = 0.
\]
Moreover, for any $i: x^*_i \neq 0$, the optimality condition on $s^*_i$ implies that $s^*_i = \|\vs^*\|_{\infty}$. Therefore, we have
\begin{eqnarray*}
(\vs^*)^\top \vx^* & = &\|\vx^*\|_{1} \|\vs^*\|_{\infty}  = \alpha  \|\vs^*\|_{\infty}\\
& \geq & (\vs^*)^{T} \tilde \vx, \quad \forall \|\tilde \vx\|_{1} \leq \alpha.
\end{eqnarray*}
Thus, $\vx^*$ is optimal to (\ref{EPin-sc-primal}).

In the case $\vw = {\bf 0}$, the corresponding dual objective equals to $-\alpha \|\vs^*\|_{\infty} + c \sum_{i=1}^m t^*_i$. Meanwhile, the primal objective is
\begin{eqnarray*}
& & \textstyle \frac{1}{m}\sum_{m=1}^m L_{\tau,c}( - y_i (\vu_i^\top \vx^*)) \\
& = & \textstyle c\sum_{i=1}^m t_i^* - \sum_{i=1}^m t^*_i y_i (\vu_i^\top \vx^*)\\
& = & \textstyle c\sum_{i=1}^m t_i^* - (\vs^*)^{T} \vx^*, \\
& = & \textstyle -\alpha \|\vs^*\|_{\infty} + c \sum_{i=1}^m t^*_i,
\end{eqnarray*}
where the first equality comes from the optimality condition (\ref{sol_cons_2_d})--(\ref{sol_cons_2_f}), the second and the last equality are true because $\vw = {\bf 0}$ and~\eqref{sol_cons_2_c}, respectively. Since the objectives of the primal and dual problems are equal, $\vx^*$ is optimal to (\ref{EPin-sc-primal}).
\end{proof}
Assume that the $\ell_1$-norm of the true signal $\bar \vx$ is known. We set $\alpha=\|\bar\vx\|_1$ for EPin-sc and test its performance for different $\tau$ values in Fig.\ref{fig-SNR-s-sc:a}. Note that $\tau = -1$ corresponds to Plan's model. In many applications, the $\ell_1$-norm of the true signal is not known, and we have to estimate it. We fix $\tau=-0.3$ and show the performance for different $\alpha$ values in Fig.\ref{fig-SNR-s-sc:b}, where $\alpha = \sqrt{K}$ is marked.

\begin{figure}[htbp]
  \centering
  \subfigure[]{
    \label{fig-SNR-s-sc:a} 
    \includegraphics[width=0.42\linewidth]{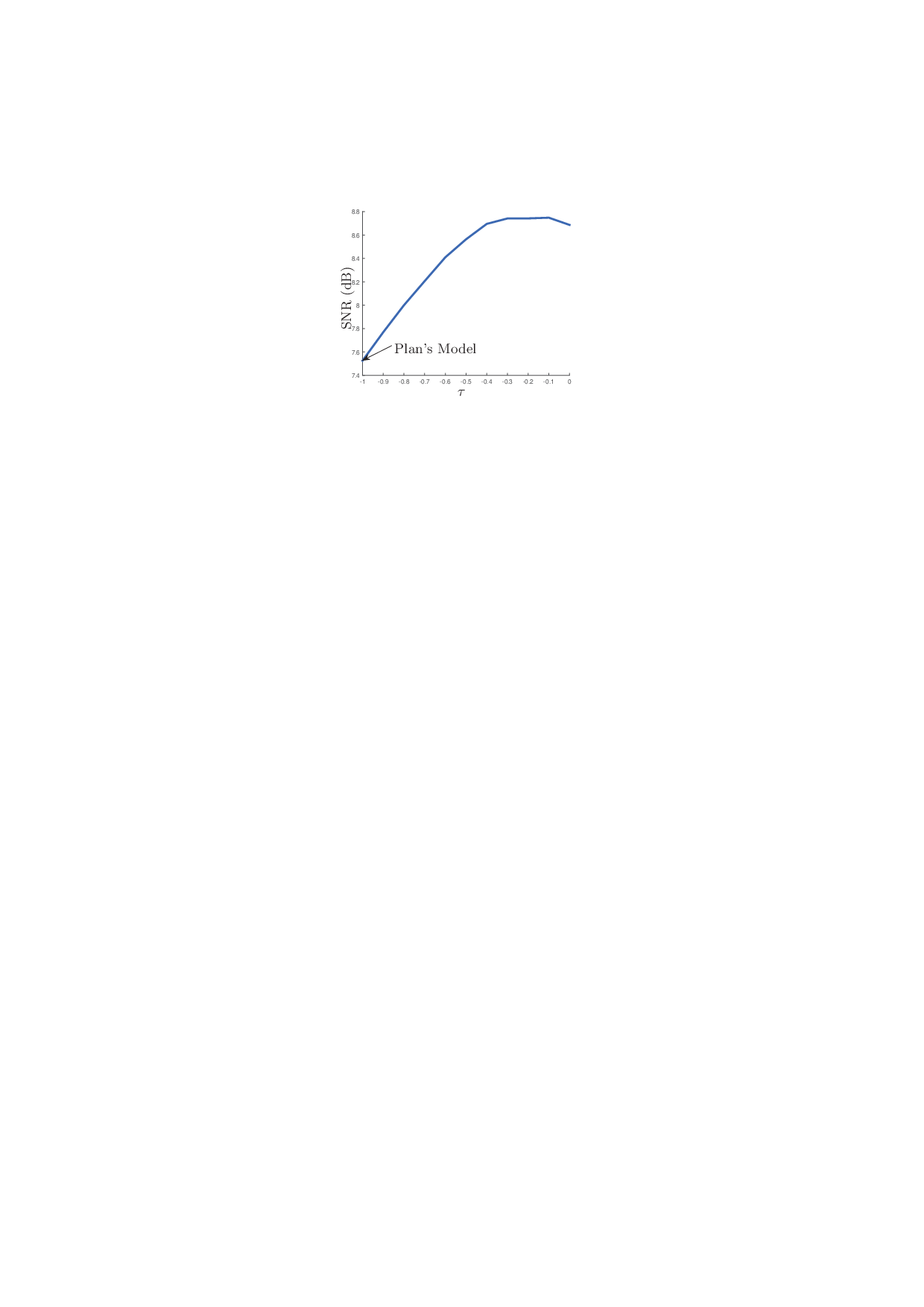}}
  \subfigure[]{
    \label{fig-SNR-s-sc:b} 
    \includegraphics[width=0.42\linewidth]{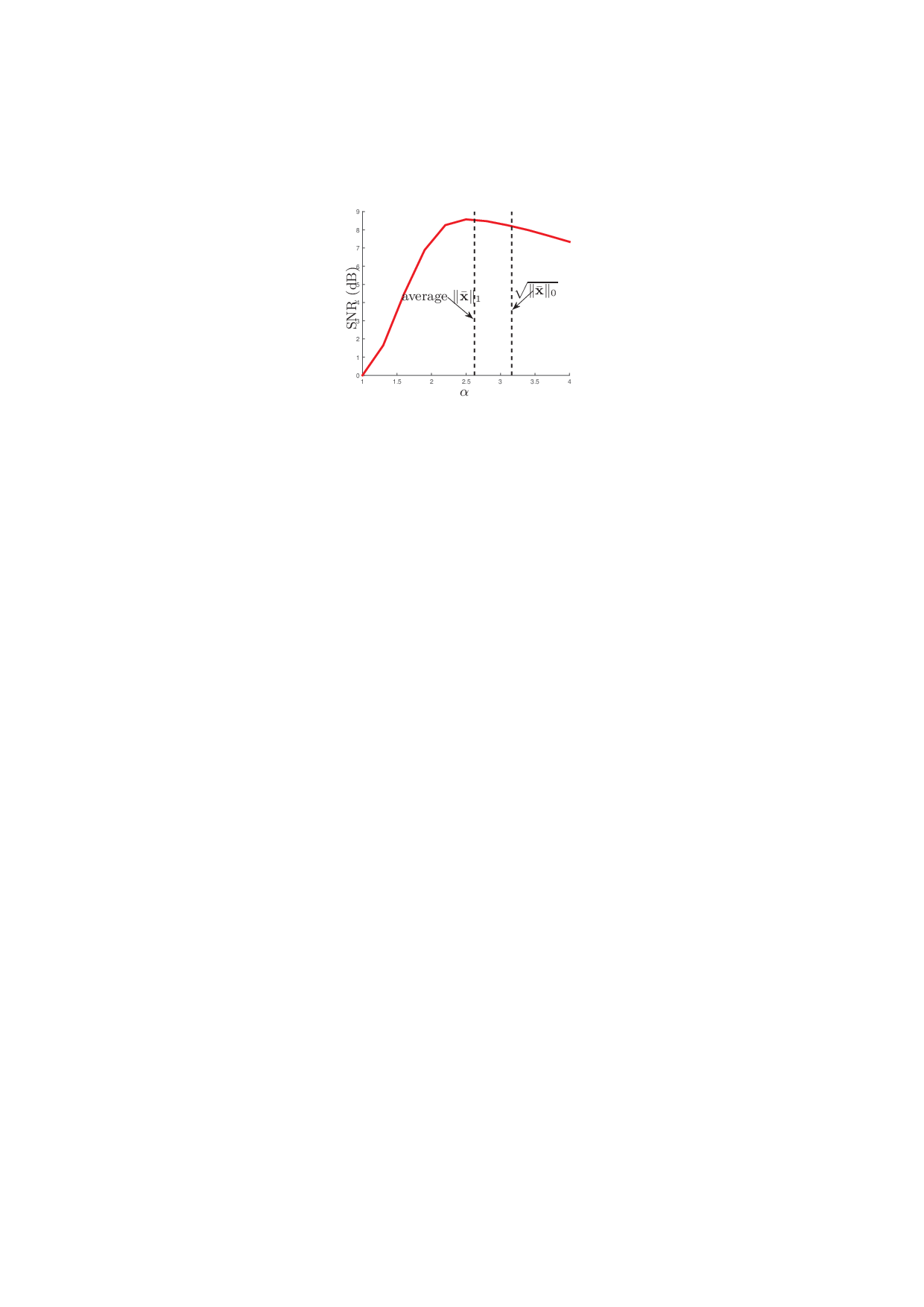}}
  \caption{Average SNR of EPin-sc for different $\tau$ and $\alpha$ values with $m = 500, n = 1000, K = 10, s_n = 10$, and $r_f = 10\%$. (a) We optimally choose $s = \|\bar \vx\|_1$ and test the performance for different $\tau$ values. (b) Set $\tau = -0.3$, and test different $\alpha$ values.
    }\label{fig-SNR-s-sc}
\end{figure}

\section{Numerical Experiments}\label{sec:6}

In the previous sections, we discussed the pinball loss minimization for robust 1bit-CS, proposed two convex models, and designed the corresponding fast algorithms. In this section, we evaluate the performance of the pinball loss minimization in numerical experiments. The data are generated as in the previous section, i.e., we randomly choose $K$ components from a $n$-dimensional signal, draw their values from the Gaussian distribution, and normalize the signal onto the unit $\ell_2$-norm ball. Then, $m$ sign observations are generated by (\ref{noise-data}), where $\varepsilon$ is the Gaussian noise with signal-to-noise ratio $s_n$. We also consider sign flips with ratio $r_f$. The experiments are done with Matlab 2014b on Core i5-3.10GHz and 8.0GB RAM. The source code of the proposed algorithms can be found on the authors' homepage\footnote{\url{http://www.esat.kuleuven.be/stadius/ADB/huang/downloads/1bitCSLab.zip}}. Notice that in our numerical study we only consider Gaussian measurements and normalized signals, which is the mainstream for 1bit-CS. Recently, there are some discussions on non-Gaussian measurements \cite{ai2014one} and the $\ell_2$-norm estimation \cite{knudson2014one}.

Before investigating the recovery quality of the pinball loss minimization, we first evaluate the effectiveness of the dual coordinate ascent algorithms. We compare the computational time of Alg.~\ref{EPin-algorithm} and CVX for solving EPin~\eqref{elastic-net-primal}. More specifically, we vary the number of measurements $m$ from $50$ to $600$, meanwhile keep the ratio $n/m = 2$ and the sparsity level $K/n = 0.02$. The noise level is $s_n = 10, r_f = 10\%$, and the parameter in EPin (\ref{EPin-algorithm}) is chosen as $\mu = \sqrt{{\log(n)}/{m}}, \tau = -0.5$. The average computational time over 200 trials is then reported in Fig.\ref{fig-time:a}, from which we can observe that the proposed dual coordinate ascent algorithm significantly save the computational time from CVX.

\begin{figure}[hptb]
  \centering
  \subfigure[]{    \label{fig-time:a}
    \includegraphics[width=0.45\linewidth]{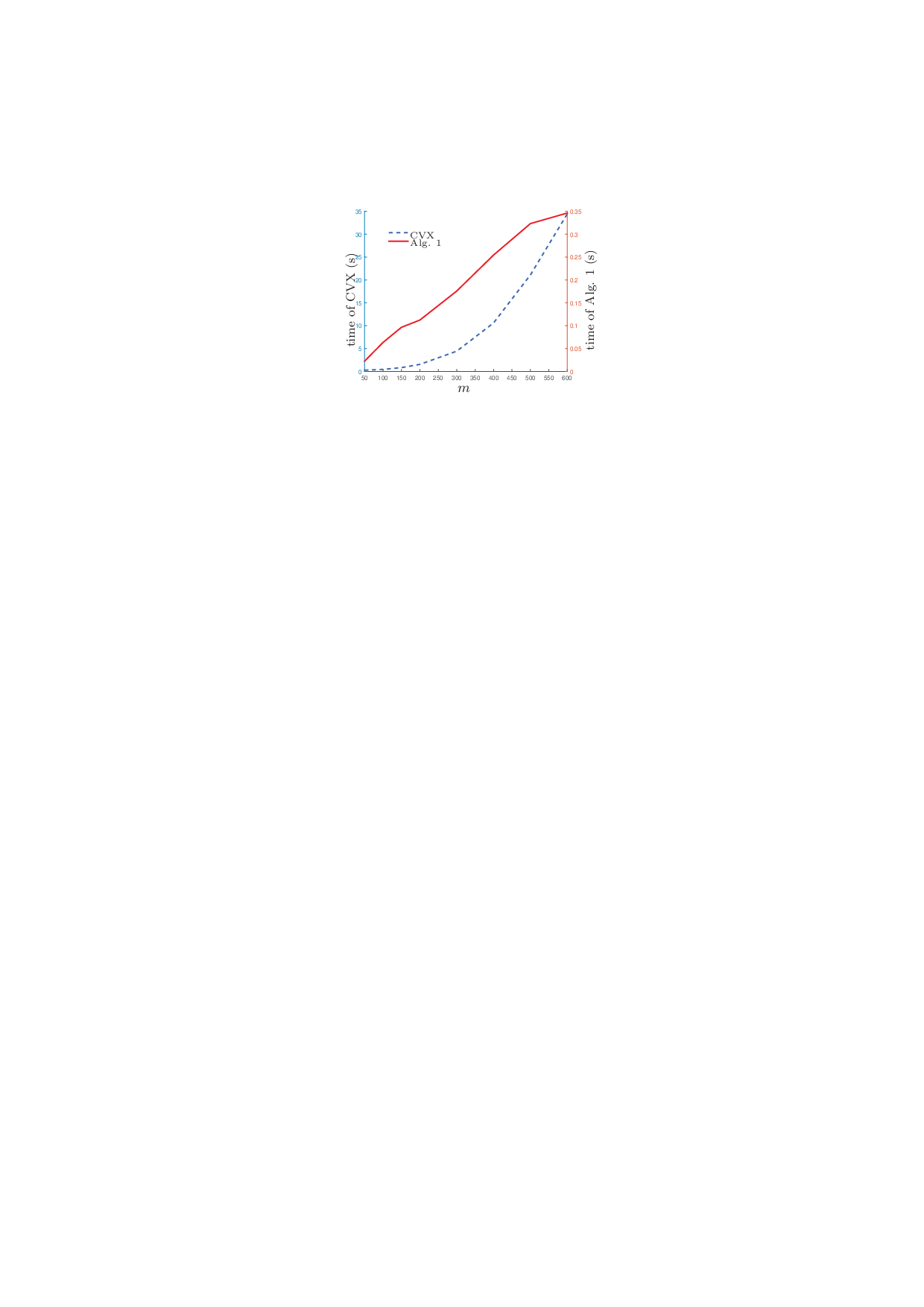}}
  \subfigure[]{
    \label{fig-time:b}
    \includegraphics[width=0.45\linewidth]{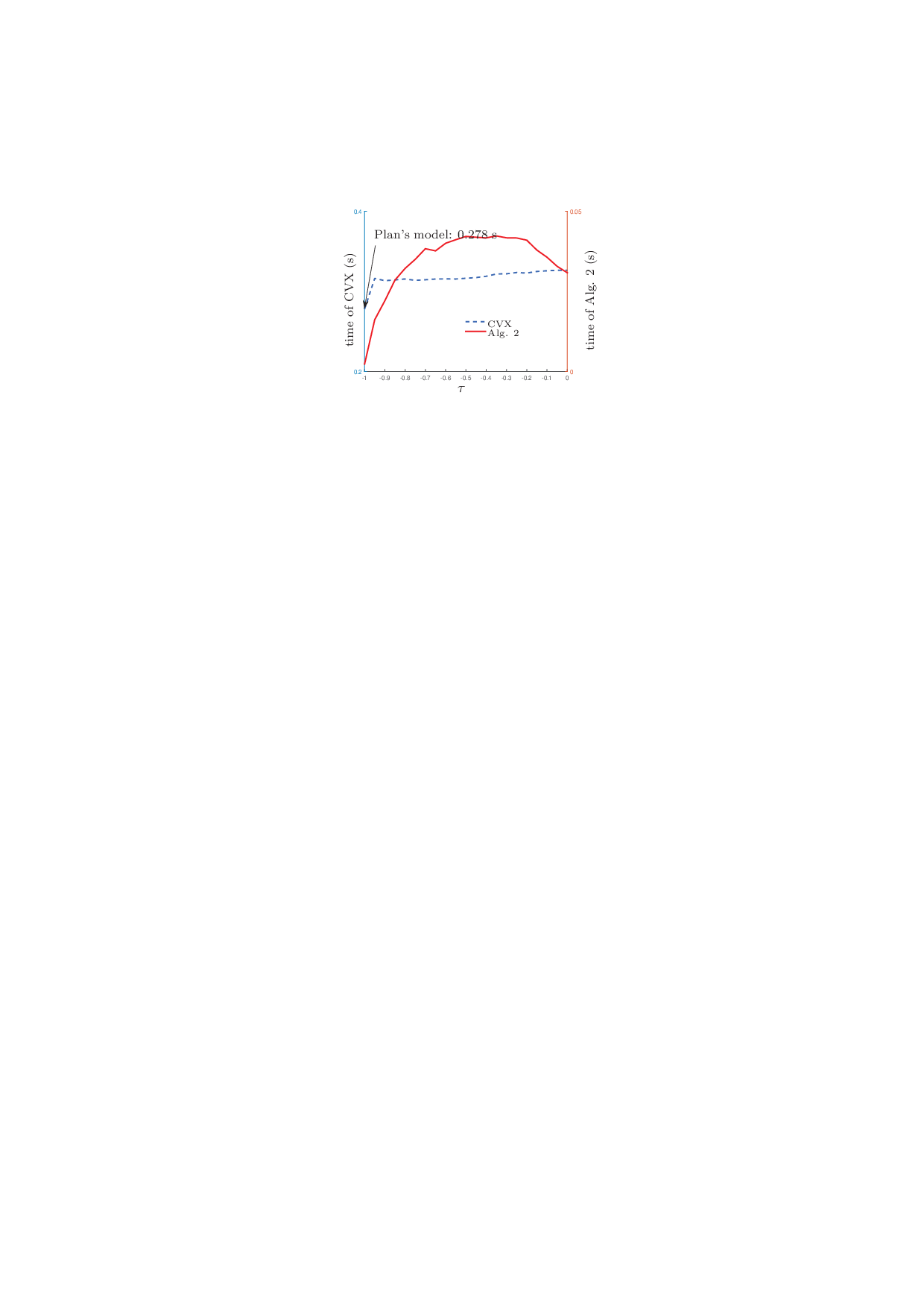}
  }
\caption{Average computational time for EPin by  the proposed algorithms (red solid curve) and CVX (blue dashed curve). (a) $n/m = 2$, $K/n = 0.02$, $\tau = -0.5$, and we vary $m$ from $50$ to $600$. (b) we choose $n = 100, m = 50$, and change $\tau$ from $-1.0$ to $0.0$.}\label{fig-time}
\end{figure}

Similarly, Alg.~\ref{EPin-algorithm-2} can solve EPin-sc (\ref{EPin-sc-primal}) efficiently. Besides the problem size, the computational time is also linked with $\tau$, which controls the feasible set in the dual problem.
In Fig.\ref{fig-time:b}, we report the average computational time of EPin-sc by CVX and Alg.~\ref{EPin-algorithm-2}. The computational time of Plan's model, for which CVX is suggested by \cite{zhang2014efficient}, is marked as well.

Next we consider signal recovery quality for EPin and EPin-sc. Denote the recovered signal as $\tilde \vx$ and the true signal as $\bar x$. The recovery quality then can be measured by SNR (\ref{snr}). Moreover, we will also consider inconsistency ratio $\mathrm{INR}(\bar \vx, \tilde \vx) = {\left |\{i: \sgn(\vu_i^\top \bar \vx) \neq \sgn(\vu_i^\top \tilde \vx)\} \right|}/{m}$.

\begin{figure}[hptb]
  \centering
  \subfigure[]{
    \label{fig-rf:a} 
    \includegraphics[width=0.45\linewidth]{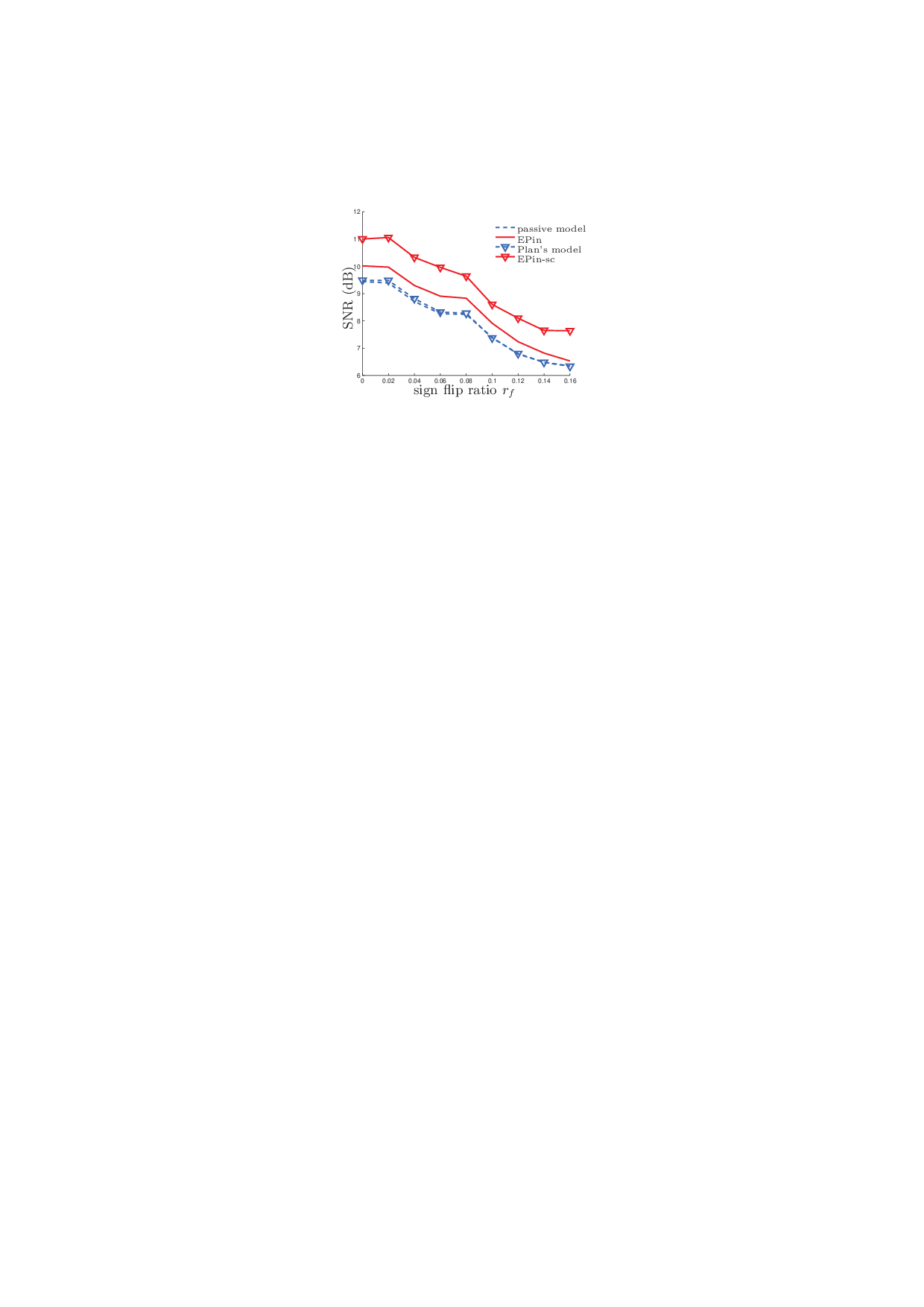}}
  \subfigure[]{
    \label{fig-rf:c} 
    \includegraphics[width=0.45\linewidth]{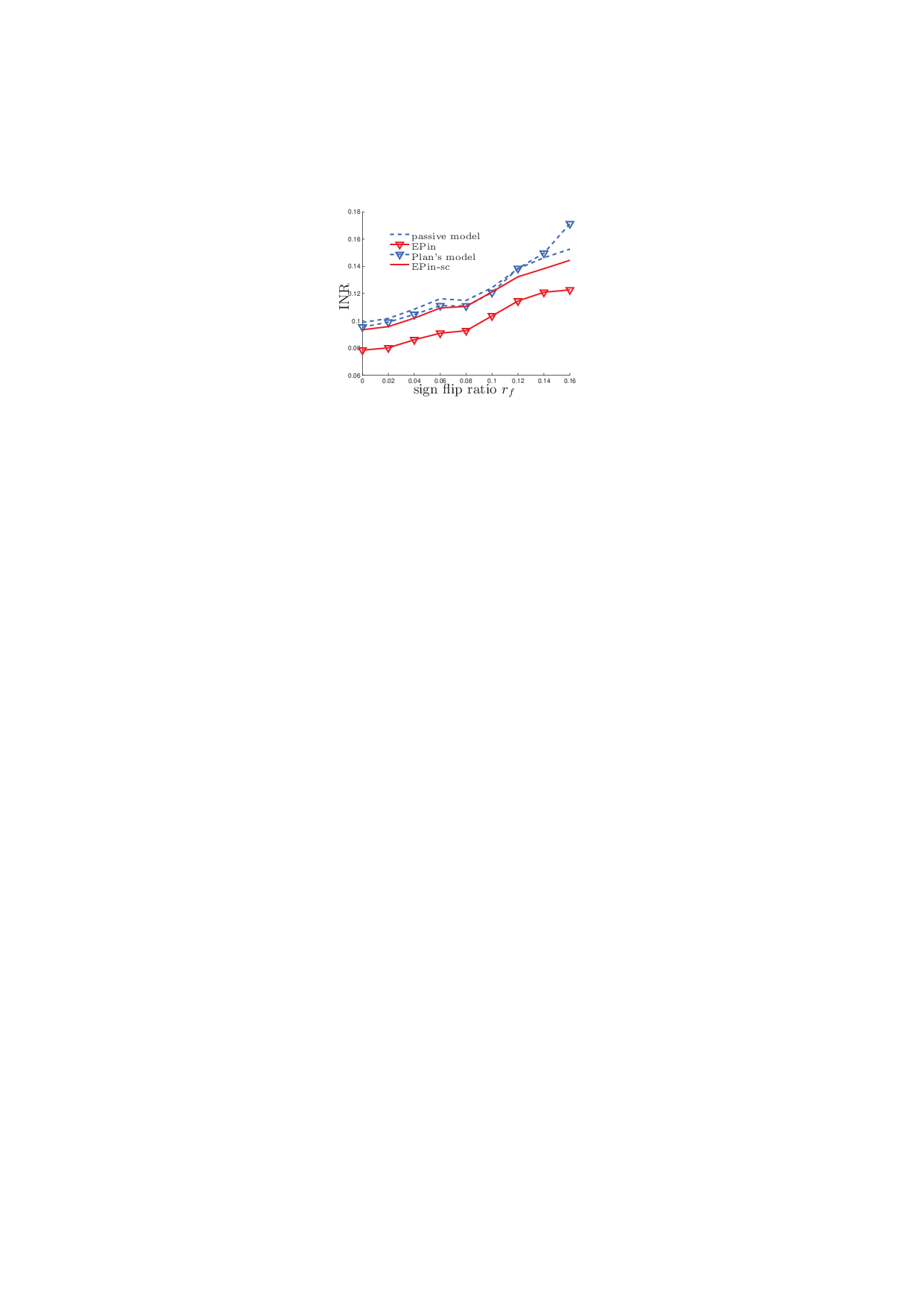}}
  \caption{Recovery performance of the passive model (blue dashed line), EPin (red solid line), Plan's model (blue dashed line with triangle), and EPin-sc (red solid line with triangle) for different sign flip ratios. In this experiment, $n = 1000, m = 500, K = 10$, and $s_n = 10$. The recovery quality is measured by: (a) recovery SNR; (b) inconsistency ratio.}\label{fig-rf}
\end{figure}

As discussed previously, the best choice of $\tau$ and $\mu$ for EPin (\ref{elastic-net-primal}) is problem-dependent and 10-fold cross-validation based on consistency can be used to tune the parameters. Specifically, we randomly partition the data into 10 subsets. In turn, one of the subsets is used for validation and the rest are used for training. For each pair of $\tau$ and $\mu$, we use Alg.~\ref{EPin-algorithm} on the training data and calculate the sign consistency on the validation data. Then $\tau$ and $\mu$ corresponding to the highest consistency are chosen. The parameter candidate set is $\tau \in \{-1, -0.8, -0.6, -0.4, -0.2\}$ and $\mu/\sqrt{{\log(n)}/{m}} \in \{0.6, 0.8, 1.0, 1.2\}$. To make a fair comparison, $\mu$ in passive model (\ref{Zhang}) is also tuned by 10-fold cross-validation. For EPin-sc (\ref{EPin-sc-primal}) and Plan's model (\ref{Plan}), the best $s$ is the $\ell_1$-norm of the true signal. It also can be tuned by cross-validation but in this experiment we set $s = \|\bar \vx\|_1$ to show the best performance of EPin-sc and Plan's model. Additionally, the comparison between EPin with $\mu$ obtained by cross-validation and EPin-sc with an optimal $s$ helps us evaluate the parameter tuning method.

Using the above setting, we evaluate Passive, Plan's model and our proposed algorithms for different sign flip ratios $r_f$. Here,  $n = 1000, m = 500, s_n = 10$ and the average recovery performance of 200 trials for different $r_f$ is displayed in Fig.\ref{fig-rf}. The performance trends for these methods are similar. It is interesting to further consider sign flip detection methods, e.g., adaptive outlier pursuit technique designed in \cite{yan2012robust} for BIHT. These methods have already shown good performance for the one-sided $\ell_1$ loss minimization, and are also promising to improve the performance of the pinball loss minimization.

In the following, we fix  $r_f = 10\%$ but vary $s_n$ from $1$ to $100$ and then report the average performance in Fig.\ref{fig-noise}. From these results, one can observe that the performance is generally stable for different noise levels, showing the robustness of EPin and EPin-sc to Gaussian noise. Moreover, in Fig.\ref{fig-noise:c}, when $s_n \geq 20$, INRs for EPin/EPin-sc are below $0.1$. Notice that in data generation, there are $r_f = 10\%$ sign flips. Then INR being less than $0.1$ implies the tolerance of the sign flips.

\begin{figure}[hptb]
  \centering
  \subfigure[]{
    \label{fig-noise:a} 
    \includegraphics[width=0.48\linewidth]{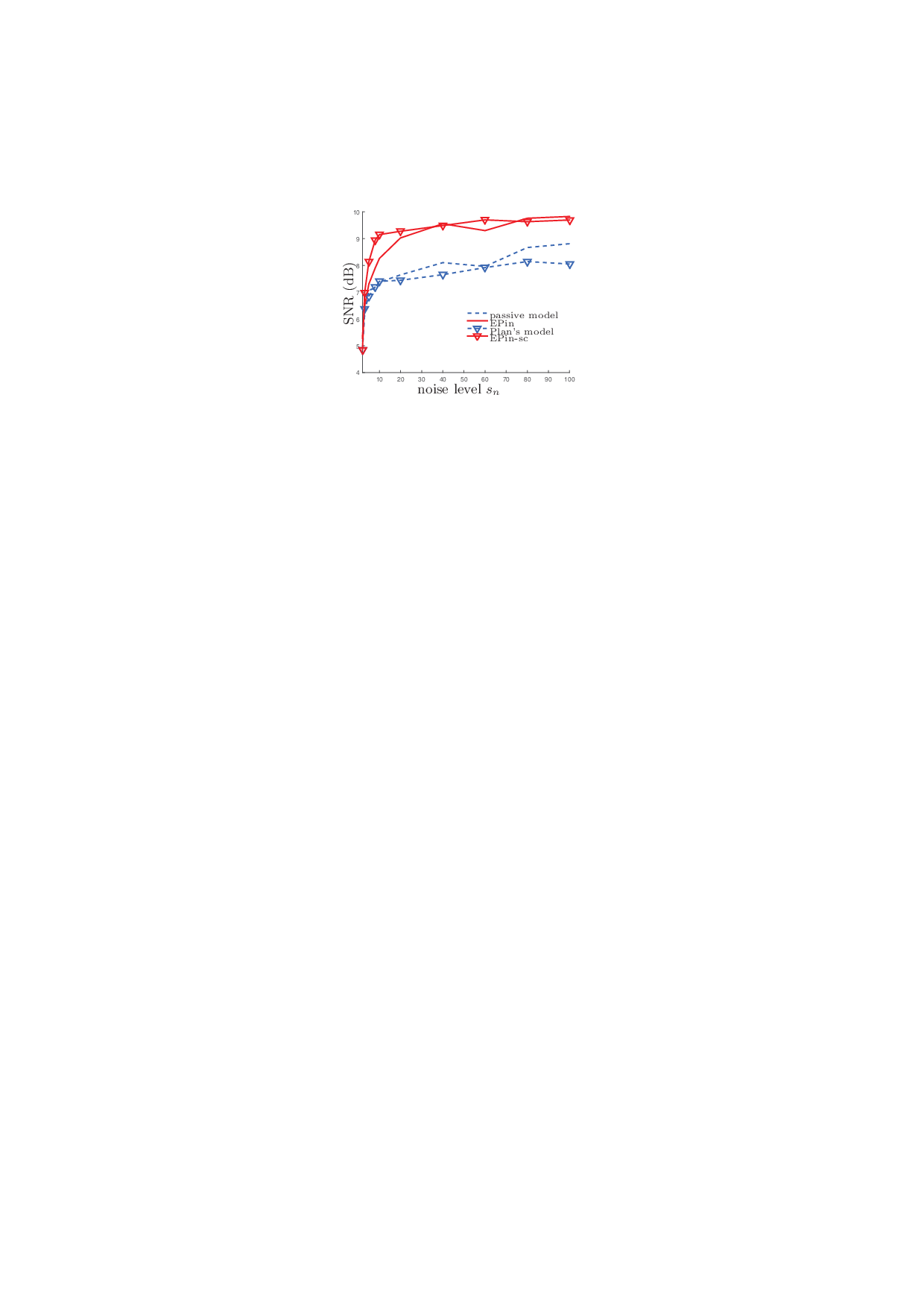}}
  \subfigure[]{
    \psfrag{ins}[c]{\footnotesize INR}
    \psfrag{noise}[c]{\footnotesize noise level $s_n$}
    \psfrag{data1}[l]{\tiny passive model}
    \psfrag{data2}[l]{\tiny EPin}
    \psfrag{data3}[l]{\tiny Plan's model}
    \psfrag{data4}[l]{\tiny EPin-sc}
    \label{fig-noise:c} 
    \includegraphics[width=0.48\linewidth]{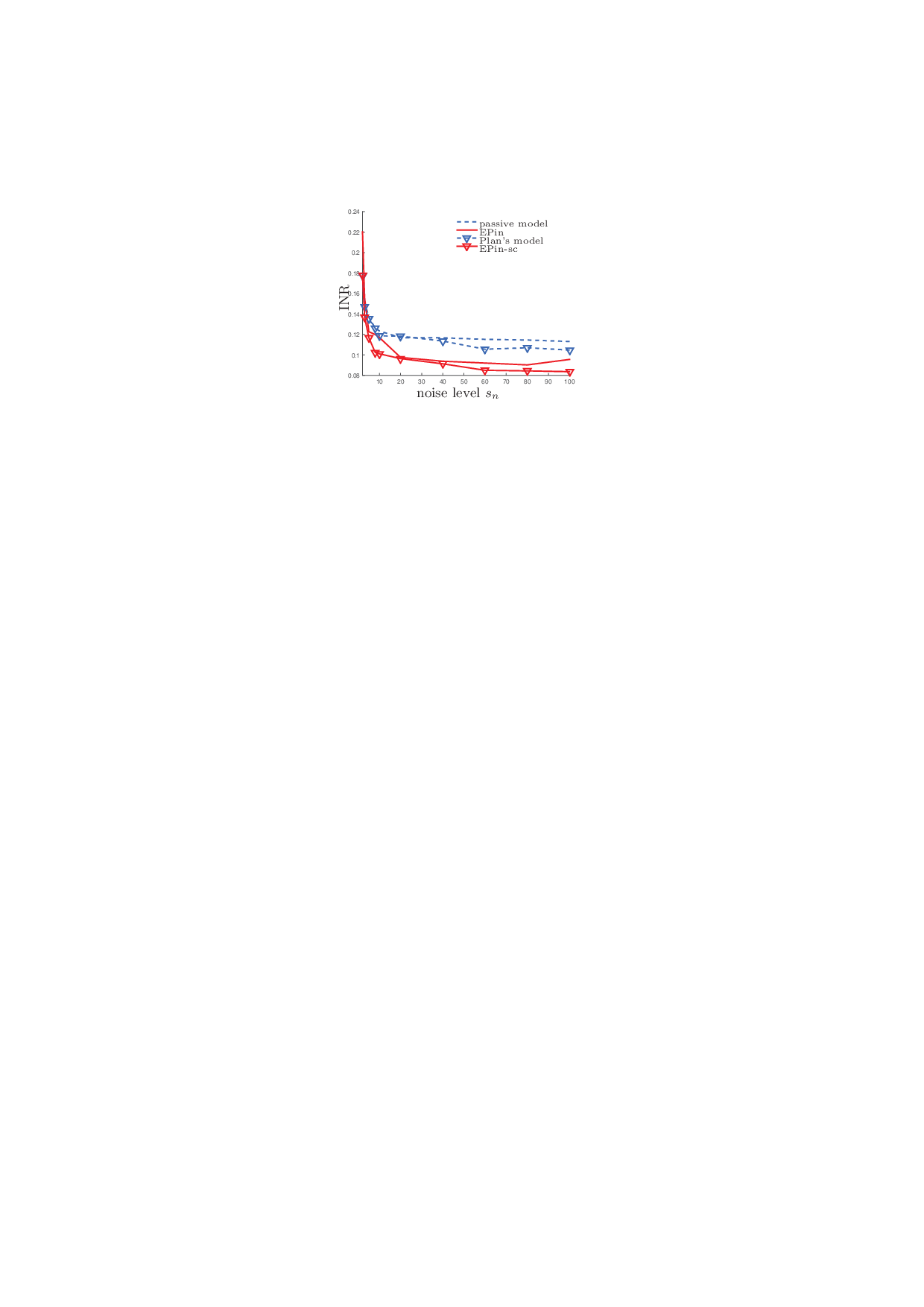}}
  \caption{Recovery performance of the passive model (blue dashed line), EPin (red solid line), Plan's model (blue dashed line with triangle), and EPin-sc (red solid line with triangle) for different noise levels in the case $n = 1000, m = 500, K = 10$ and $r_f = 10\%$. The recovery quality is measured by: (a) recovery SNR;  (b) inconsistency ratio.}\label{fig-noise}
\end{figure}

Next, we fix $s_n = 10$ but increase the number of measurements to compare these methods. The recovered qualities are illustrated in Fig.\ref{fig-m}. With the increasing of $m$, all the recovery quality measures become better. The change trends for different models are similar that EPin and EPin-sc can improve the performance from the existing algorithms. The performance of EPin-sc is slightly better than EPin, which is mainly due to the fact that $s$ is optimally given but $\mu$ is tuned by cross-validation. It indicates that good estimation on the sparsity can help recovering the true signal and in that case EPin-sc is more suitable. Though $\mu$ is chosen based on cross-validation, there is generally no big difference between EPin and EPin-sc with an optimal $s$. If there is no prior-knowledge on $\|\bar \vx\|_0$ or $\|\bar \vx\|_1$, EPin is a good choice and cross-validation on $\mu$ can help.


\begin{figure}[hptb]
  \centering
  \subfigure[]{
    \label{fig-m:a} 
    \includegraphics[width=0.48\linewidth]{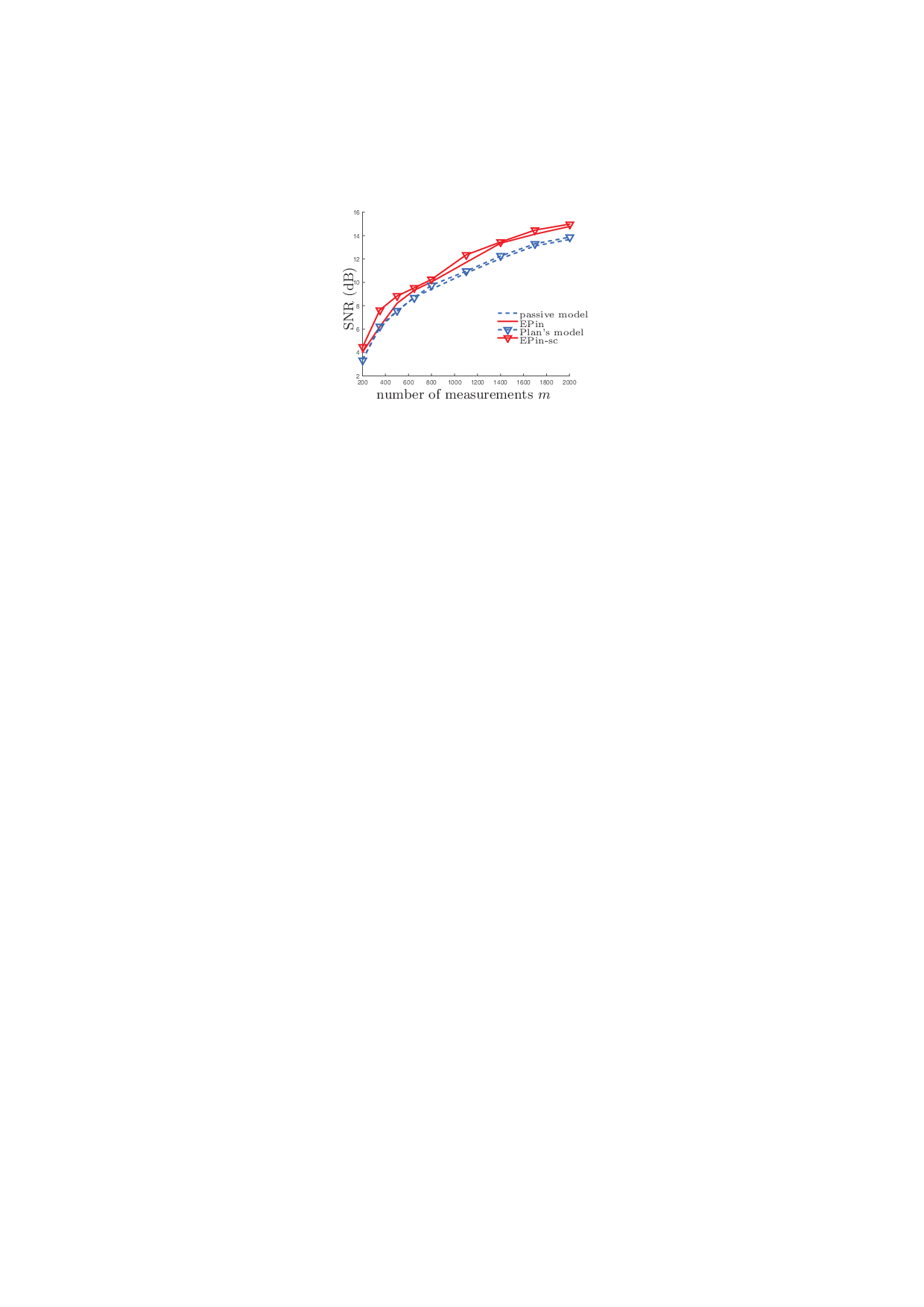}}
  \subfigure[]{
    \label{fig-m:c} 
    \includegraphics[width=0.48\linewidth]{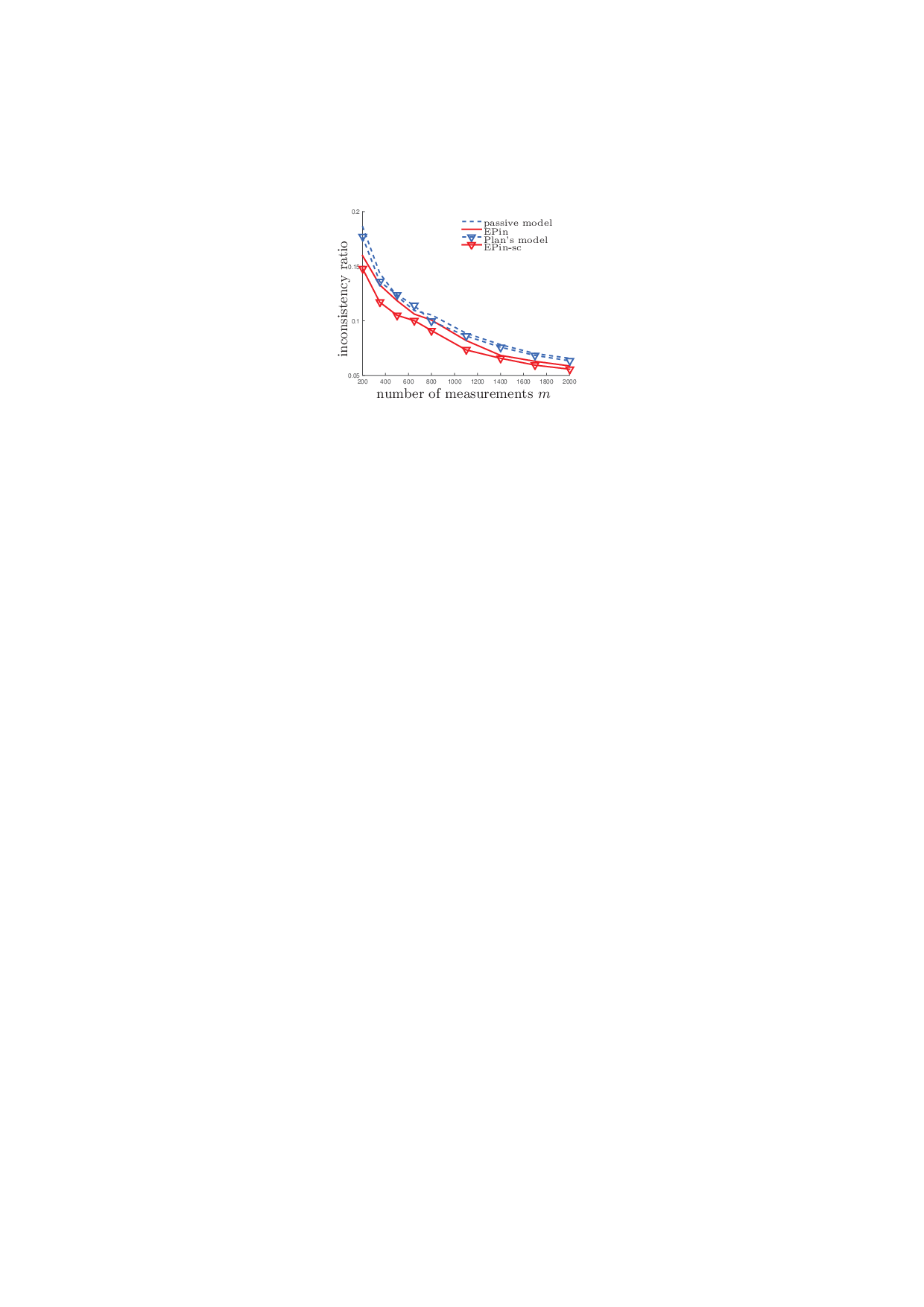}}
  \caption{Recovery performance of the passive model (blue dashed line), EPin (red solid line), Plan's model (blue dashed line with triangle), and EPin-sc (red solid line with triangle) for different number of observations. In this experiment, $n = 1000, K = 10, s_n = 10$, and $r_f = 10\%$. The recovery quality is measured by: (a) recovery SNR; (b) inconsistency ratio.}\label{fig-m}
\end{figure}

{The above observations and comparison keep true for different sparsity levels, as plotted in Fig.\ref{fig-K:a}. When the number of non-zero components is small, the difference among these methods is minor. But when the number is large, Epin and Epin-sc show significant advantage over the other methods, demonstrating the effect of using pinball loss.}

\begin{figure}[hptb]
 \centering
 \subfigure[]{
   \label{fig-K:a} 
   \includegraphics[width=0.48\linewidth]{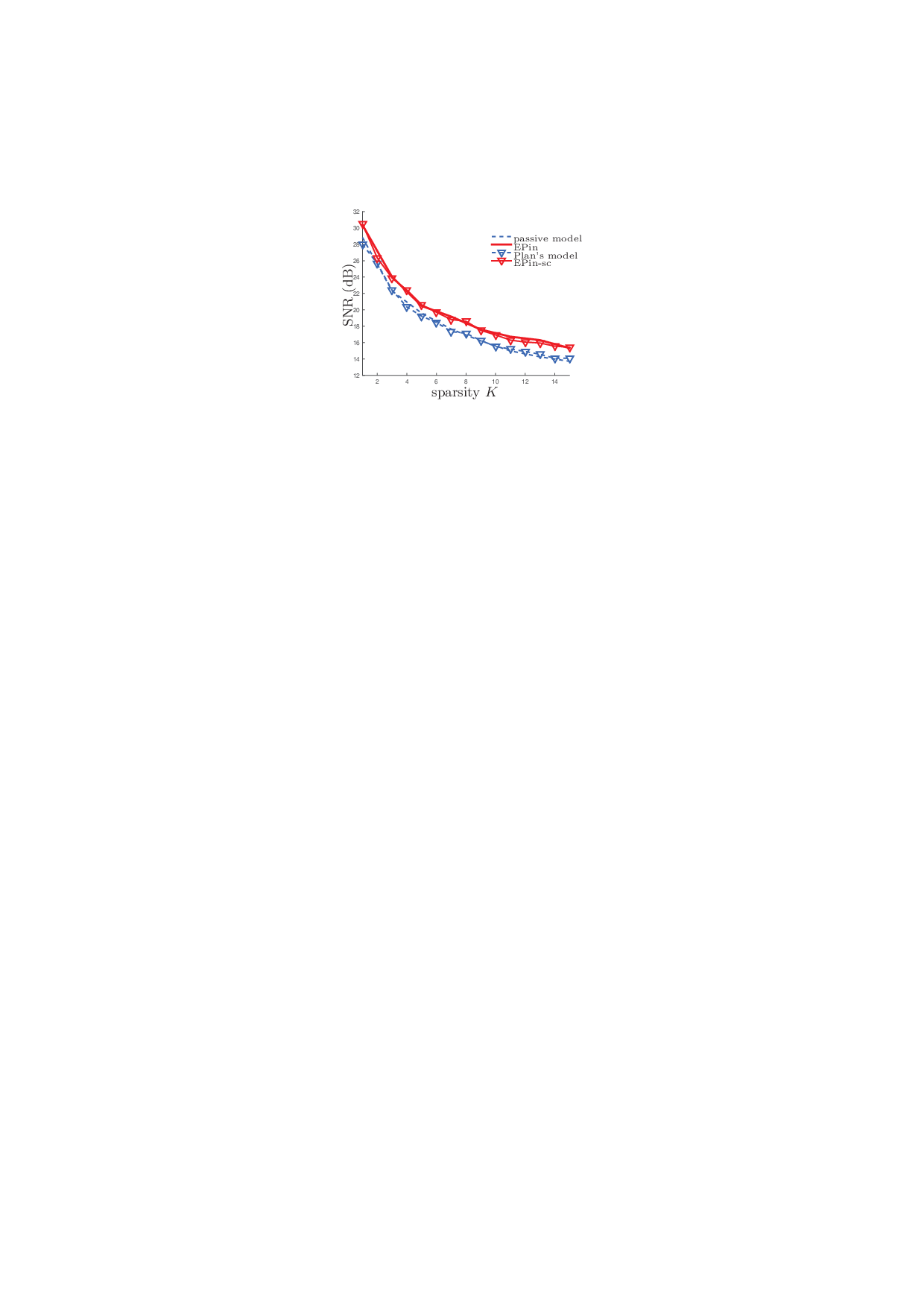}}
 \subfigure[]{
   \label{fig-K:c} 
   \includegraphics[width=0.48\linewidth]{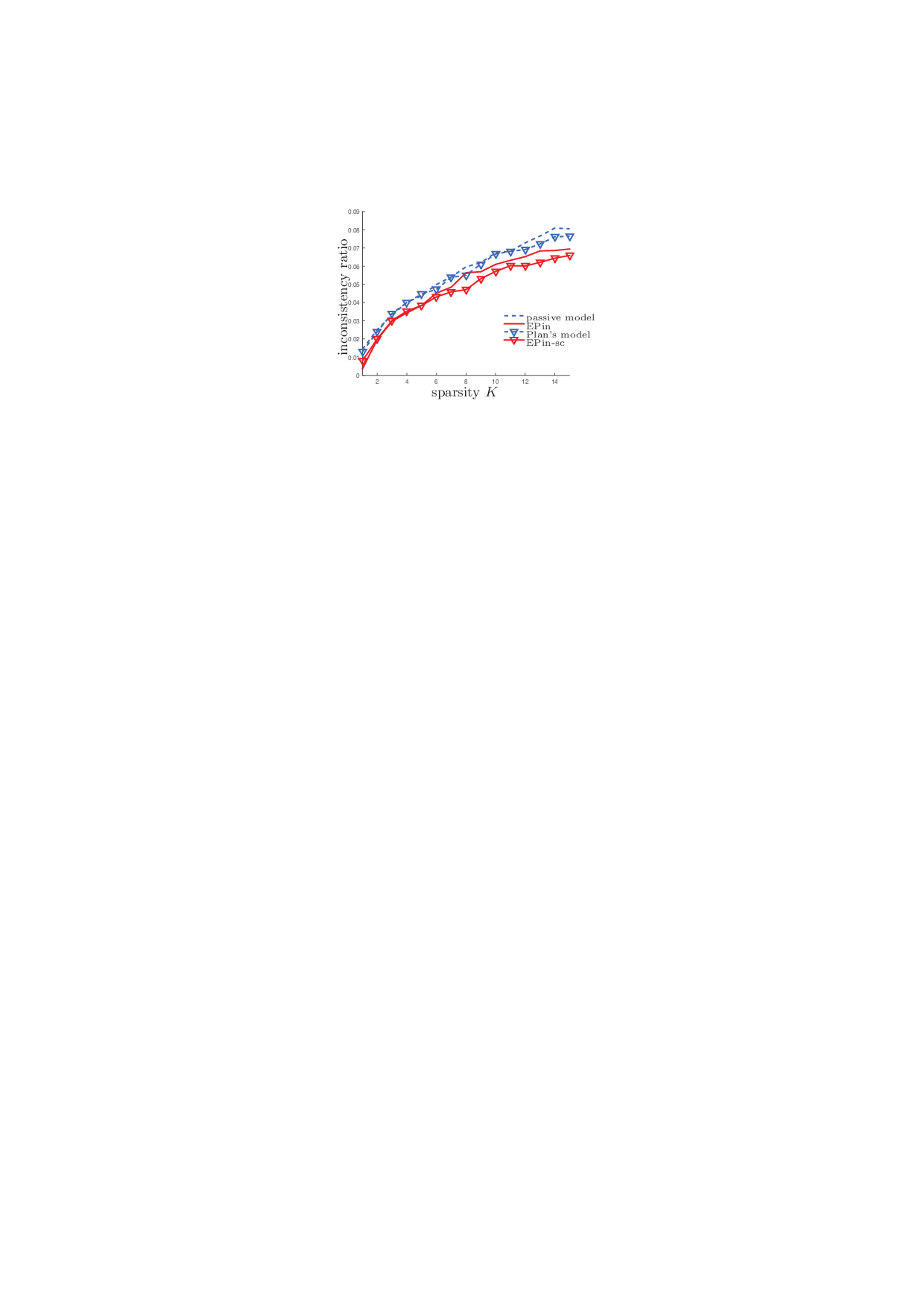}}
 \caption{Recovery performance of the passive model (blue dashed line), EPin (red solid line), Plan's model (blue dashed line with triangle), and EPin-sc (red solid line with triangle) for different sparsity levels in the case $n = 1000, m = 2000, s_n = 10$ and $r_f = 10\%$. The recovery quality is measured by: (a) recovery SNR; (b) inconsistency ratio.}\label{fig-K}
\end{figure}

{At last, we evaluate the proposed EPin and EPin-sc in higher dimensional spaces. In this experiment, we keep the ratio of the number of measurements over the signal dimension as 0.5, and that of the sparsity level over the dimension as 0.01. The SNRs and INRs of reconstructed signal for $n = 1000, 2000, \ldots, 10000$ are given in Fig.\ref{fig-n}. which shows that the reconstruction performance for different dimensional problems is stable for the same sparsity and observations ratio. Note that here we do not tune parameters for different $n$ but use the ones set for $n=1000$. Thus, the performance for $n=1000$ is a bit better than the other cases.}

\begin{figure}[hptb]
  \centering
  \subfigure[]{
    \label{fig-n:a} 
    \includegraphics[width=0.48\linewidth]{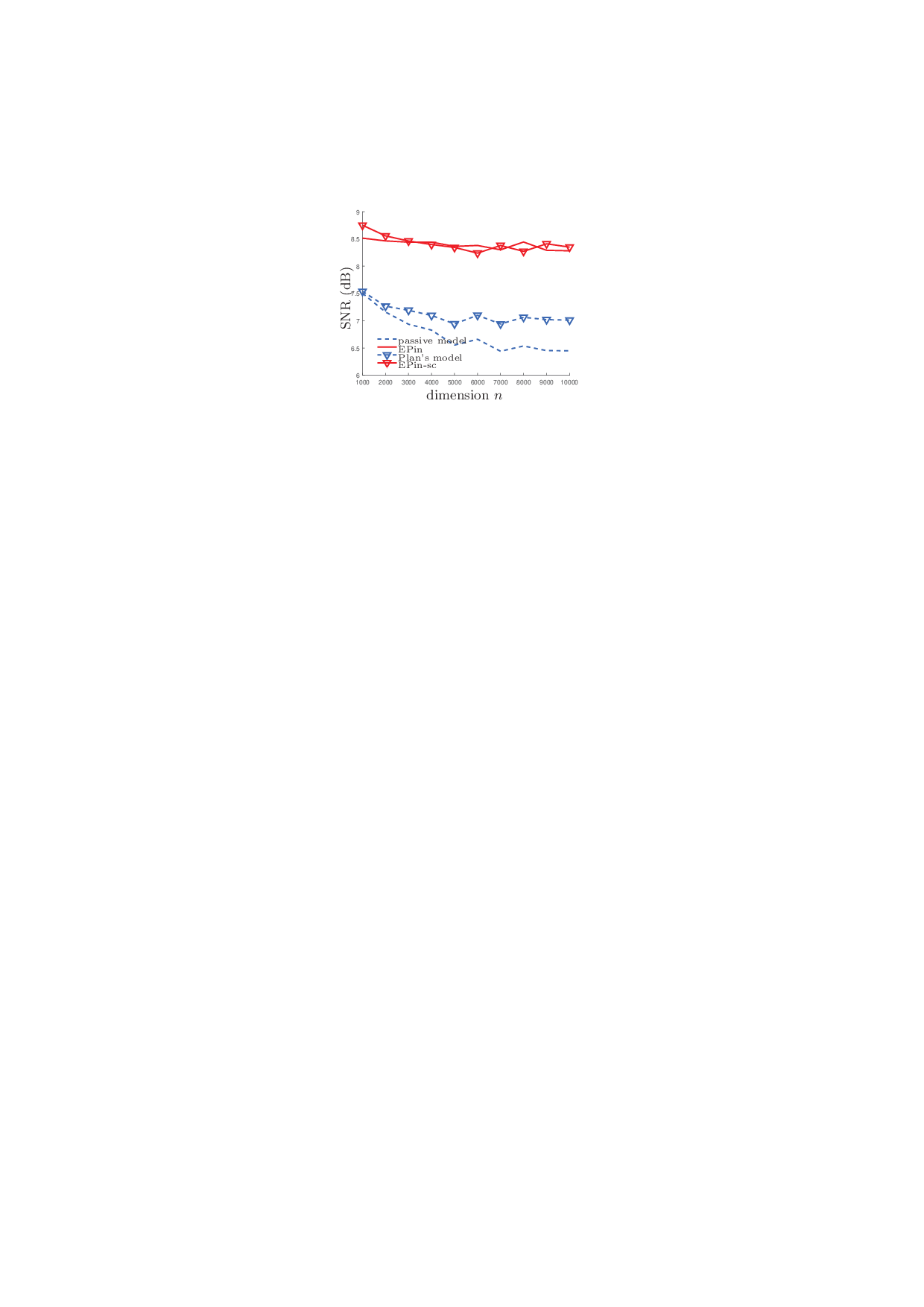}}
  \subfigure[]{
    \label{fig-n:c} 
    \includegraphics[width=0.48\linewidth]{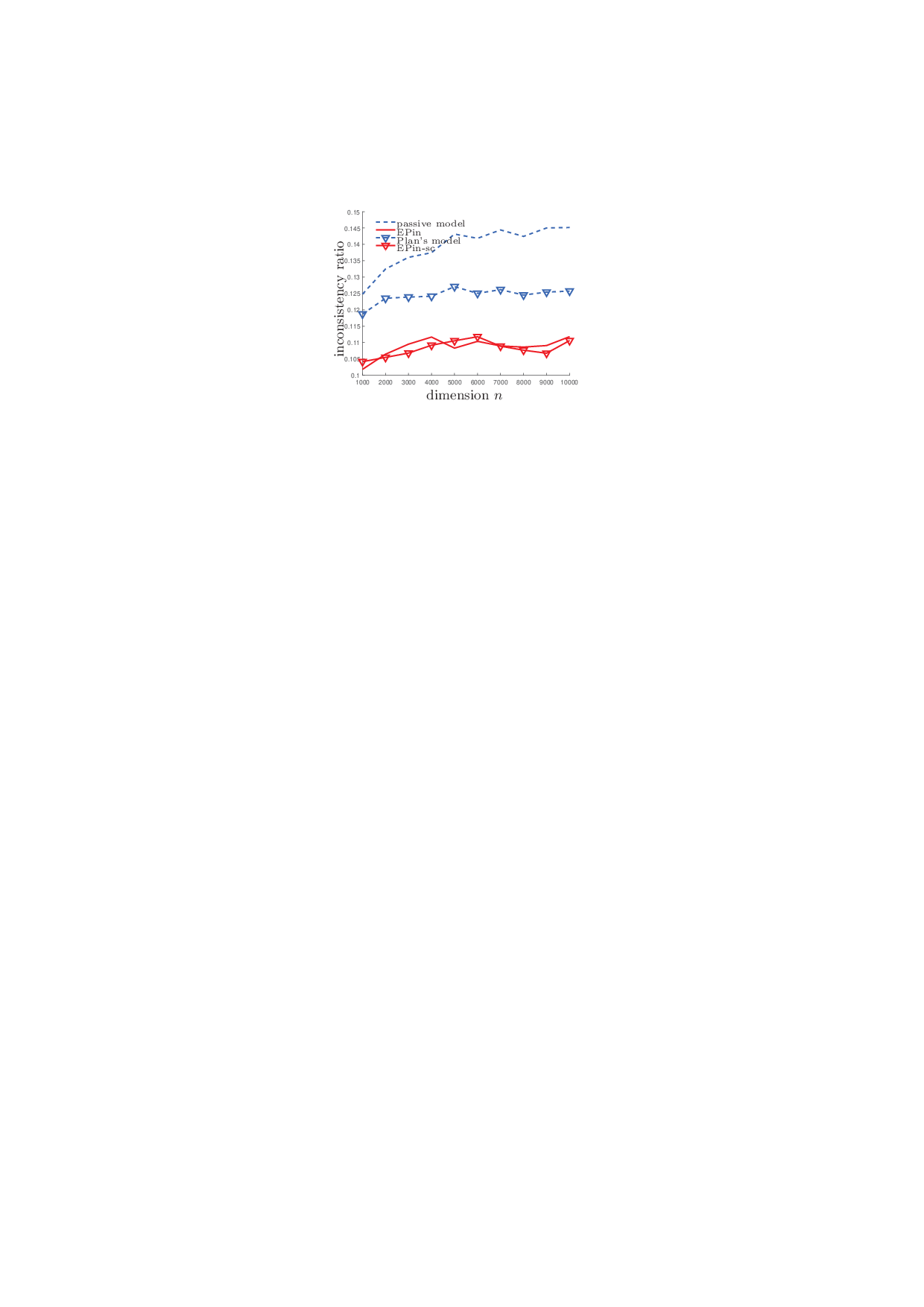}}
  \caption{Recovery performance of the passive model (blue dashed line), EPin (red solid line), Plan's model (blue dashed line with triangle), and EPin-sc (red solid line with triangle) on different dimensions. In this experiment, $m = n/2, K = n/100, s_n = 10$, and $r_f = 10\%$. The recovery quality is measured by: (a) recovery SNR; (b) inconsistency ratio.}\label{fig-n}
\end{figure}

\section{Conclusions}\label{sec:7}

One-bit compressive sensing aims at recovering a signal from a set of sign measurements. Currently, the one-sided $\ell_1$ and the linear loss are the two popular choices for 1bit-CS. Inspired by this observation, a compromise between them, i.e., the pinball loss, is expected to have good recovery performance. In this paper, we analyzed the pinball loss from maximum likelihood and then establish two convex models, EPin and EPin-sc, to deal with 1bit-CS in the presence of noise. The corresponding fast dual coordinate ascent algorithms are proposed, and the convergence is proved. The numerical experiments demonstrate that the proposed algorithms can efficiently find the optimal solutions and the recovery quality is quite good.

{For the future work, the adaptive outlier pursuit corresponding to EPin/EPin-sc is promising to further improve the performance with sign flips. Moreover, replacing the $\ell_1$-norm penalty in EPin/EPin-sc by some non-convex ones could enhance sparsity of the solution and the recovery could be improved, especially when there are not enough observations. The current nonconvex methods for 1bit-CS is mainly for the linear loss \cite{zhu2015towards,huang2018nonconvex,north2015one, chen_onebit_2015}. For other loss functions, e.g., the hinge loss and the pinball loss, the main obstacle is that nonconvex penalties are hard to optimize. The potential techniques include difference of convex functions algorithm and Frank-Wolf algorithm \cite{rinaldi2011concave,rinaldi2010concave,lethi2015concave}.}

\section*{Acknowledgment}
{
\small

This work was partially supported by $\bullet$ EU: The research leading to these results has received funding from the European Research Council under the European Union's Seventh Framework Programme (FP7/2007-2013)/ERC AdG A-DATADRIVE-B (290923). This paper reflects only the authors' views, the Union is not liable for any use that may be made of the contained information. $\bullet$ Research Council KUL: GOA/10/09 MaNet, CoE PFV/10/002 (OPTEC), BIL12/11T; PhD/Postdoc grants. $\bullet$ Flemish Government: FWO: projects: G.0377.12 (Structured systems), G.088114N (Tensor based data similarity); PhD/Postdoc grants IWT: projects: SBO POM (100031); PhD/Postdoc grants iMinds Medical Information Technologies SBO 2014. $\bullet$ Belgian Federal Science Policy Office: IUAP P7/19 (DYSCO, Dynamical systems, control and optimization, 2012-2017). X. H is also supported by Alexander von Humboldt Foundation and the National Natural Science Foundation of China (NSFC 61603248). L. S. is supported by NSFC (11571078), the Joint Research Fund by NSFC and Research Grants Council of Hong Kong (11461161006 and CityU 104012) and ``Zhuo Xue'' program of Fudan University. M. Y. is partially supported by NSF DMS-1621798.
}


%


\end{document}